\DeclareMathOperator*{\rank}{rank}
\DeclareMathOperator*{\DET}{DET}
\DeclareMathOperator*{\LP}{LP}
\DeclareMathOperator*{\cols}{cols}
\definecolor{mygreen}{RGB}{123, 211, 123}
\begin{document}
%



\title{Inner-approximate Reachability Computation via Zonotopic Boundary Analysis
} 

\subtitle{}

\author{Dejin Ren\inst{1,2}\textsuperscript{(\Letter)}\orcidlink{0000-0001-7779-0096} \and
 Zhen Liang\inst{3}\orcidlink{0000-0002-1171-7061} \and
Chenyu Wu\inst{1,2}\orcidlink{0009-0003-1571-4831}
\and
Jianqiang Ding\inst{4}\orcidlink{0000-0003-0705-0345}
\and
Taoran Wu\inst{1,2}\orcidlink{0000-0003-3398-0466}\and
Bai Xue\inst{1,2}\textsuperscript{(\Letter)}\orcidlink{0000-0001-9717-846X}
}
\authorrunning{D. Ren et al.}
%
\institute{
Key Laboratory of System Software (Chinese Academy of Sciences)\\ and State Key Laboratory of Computer Science, Institute\\ of Software, Chinese Academy of Sciences, Beijing, China \\
\email{\{rendj, wucy, wutr, xuebai\}@ios.ac.cn}
\and
University of Chinese Academy of Sciences, Beijing, China
\and
College of Computer Science and Technology,
National \\University of Defense Technology, Changsha, China\\
\email{liangzhen@nudt.edu.cn}
\and
Department of Electrical Engineering and Automation,\\ Aalto University, Espoo, 
Finland
\\
\email{jianqiang.ding@aalto.fi}
}
%


\maketitle              

\SetWatermarkAngle{0}
\SetWatermarkText{\raisebox{10cm}{%
\hspace{0.1cm}%
\href{ https://zenodo.org/records/10888880}{\includegraphics{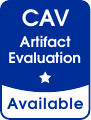}}%
\hspace{9cm}%
\includegraphics{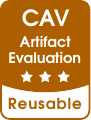}%
}}

\begin{abstract}

Inner-approximate reachability analysis involves calculating subsets of reachable sets, known as inner-approximations. This analysis is crucial in the fields of dynamic systems analysis and control theory as it provides a reliable estimation of the set of states that a system can reach from given initial states at a specific time instant. In this paper, we study the inner-approximate reachability analysis problem based on the set-boundary reachability method for systems modelled by ordinary differential equations, in which the computed inner-approximations are represented with zonotopes. The set-boundary reachability method computes an inner-approximation by excluding states reached from the initial set's boundary. The effectiveness of this method is highly dependent on the efficient extraction of the exact boundary of the initial set. To address this, we propose methods leveraging boundary and tiling matrices that can efficiently extract and refine the exact boundary of the initial set represented by zonotopes. Additionally, we enhance the exclusion strategy by contracting the outer-approximations in a flexible way, which allows for the computation of less conservative inner-approximations. To evaluate the proposed method, we compare it with state-of-the-art methods against a series of benchmarks. The numerical results demonstrate that our method is not only efficient but also accurate in computing inner-approximations.

\keywords{Inner-approximations \and Reachability Analysis \and Set-boundary analysis \and Zonotopal tiling \and Nonlinear systems.}
\end{abstract}

\section{Introduction}

Reachability analysis involves the computation of reachable sets, which are sets of states achieved either through trajectories originating in a given initial set (i.e., forward reachable sets) or through the identification of initial states from which a system can reach a specified target set (i.e., backward reachable sets)  \cite{mitchell2007comparing}. This problem is fundamental and finds motivation in various applications such as formal verification, controller synthesis, and estimation of regions of attraction. As a result, it has garnered increasing attention from both industrial and academic communities, leading to the development of numerous  theoretical results and computational approaches \cite{althoff2021set}.  For many systems, exact reachability  analysis is shown to be undecidable  \cite{henzinger1995s}, particularly in the case of nonlinear systems. Hence, approximation methods are often employed. However, in order to use these approximations as a basis for formal reasoning about the system, it is crucial that they possess certain guarantees. Specifically, it is desirable for the computed approximation to either contain or be contained by the true reachable set, resulting in what are known as outer-approximations and inner-approximations.

This paper focuses on inner-approximate reachability analysis, which calculates an inner-approximation of the reachable set for systems described by ordinary differential equations (ODEs). The inner-approximate reachability analysis has various applications. For instance, it can be used to falsify a safety property by performing forward inner-approximate reachability analysis, which computes an inner-approximation of the forward reachable set \cite{li2018simplecar}. If the computed inner-approximation includes states that violate the safety property, then the safety property is not satisfied. On the other hand, it can be used to find a set of initial states that satisfy a desired property by performing backward inner-approximate reachability analysis \cite{lakshmikantham1990practical}. Recently, it has been applied to path-planning problems with collision avoidance \cite{schoels2020nmpc}. Several methods have been proposed for the inner-approximate reachability analysis, such as Taylor models \cite{chen2015reachability}, intervals \cite{goubault2017forward}, and polynomial zonotopes \cite{kochdumper2020computing}.

In the computation of inner-approximations, the accumulation of computational errors, known as the wrapping effect \cite{neumaier1993wrapping}, becomes pronounced with the propagation of the initial set. To overcome this, a common approach is to partition the initial set into smaller subsets, enabling independent computations on each subset. However, this widely used method often results in an excessively large number of subsets, causing burdensome computation. Consequently, in \cite{xue2016under,kochdumper2020computing}, set-boundary reachability methods were developed based on a meticulous examination of the topological structure.
These methods contract a pre-computed outer-approximation by excluding the reachable set from the boundary of the initial set, resulting in an inner-approximation. Compared to the partition of the entire initial set, set-boundary methods alleviate the computational burden and enhance the tightness of results by focusing on splitting only the boundary of the initial set. Hence, the precision of extracting and refining \footnote{If $\mathcal{P}$ and $\mathcal{Q}$ are partitions (or covers) of a set $X$, then $\mathcal{P}$ refines $\mathcal{Q}$ if for every $U\in \mathcal{P}$, there is $V\in \mathcal{Q}$ such that $U\subset V$.} the boundary of initial set significantly influences the non-conservativeness of inner-approximation aimed to compute. However, existing boundary operations have limitations that impact the precision and application of set-boundary reachability methods, either restricting the initial sets to be interval-formed \cite{kochdumper2020computing} or utilizing interval sets to outer-approximate the set boundary \cite{xue2016under}, which leads to an overly conservative inner-approximation and hinders the application of set-boundary reachability methods.

On this concern, this paper proposes a novel set-boundary reachability method focusing on efficient extraction and refinement of the initial set's boundary, along with flexible inner-approximation generations. 
 We adopt zonotopes as the abstract representation of states due to their remarkable advantages: the facets of a zonotope remain zonotopes and can be split into non-overlapping subsets while preserving their zonotopic nature. 
Based on the symmetric property of zonotope's boundary, we propose an algorithm which can efficiently extract all facets of zonotopes.
 To further refine the extracted boundary, a fundamental algorithm is developed to partition a zonotope into smaller, non-overlapping zonotopes, termed tiling algorithm. This algorithm leverages two innovative data structures, named as boundary and tiling matrices, providing a clear and efficient implementation of the partition procedure. Complexity analysis demonstrates the superior advantages of the tiling algorithm in computational complexity compared to the existing method \cite{kabi2020synthesizing}. Finally, we contract a pre-computed outer-approximation of reachable set to obtain an inner-approximation, which is achieved by excluding the outer-approximation of the reachable set from the refined boundary of the initial set. In contrast to proportionally shrinking the shape of computed outer-approximation utilized in existing method \cite{xue2016under}, we provide a more flexible strategy that allows an adaptive modification on the configuration of zonotopic outer-approximations, leading to more non-conservative inner-approximations.

\color{black}
The main contributions of this paper are as follows: 
\color{black}
\begin{itemize}

\item  \textit{A Non-overlapping Zonotope Splitting Algorithm.} We present a novel algorithm that efficiently splits a zonotope into non-overlapping subsets, while preserving their zonotopic properties. By utilizing boundary and tiling matrices, our algorithm offers a more straightforward implementation with improved computational complexity compared to existing methods.

  \item \textit{An Adaptive Contraction Strategy.} We put forward an adaptive contraction strategy for computing a zonotopic inner-approximation of the reachable set. This strategy, compared to existing methods, provides a more flexible approach for the contraction of the pre-computed outer-approximations, generating  less conservative inner-approximations.
     \item  \textit{A Prototype Tool - BdryReach.} We have developed a prototype tool named BdryReach to implement our proposed approach, which is available from \url{https://github.com/ASAG-ISCAS/BdryReach}. Numerous evaluations on various benchmarks demonstrate that BdryReach outperforms state-of-the-art tools in terms of efficiency and accuracy.  


    
\end{itemize}

\color{red}

\color{black}



\subsection*{Related Work}
\label{related work}

\subsubsection{Inner-approximation analysis}
The methods for inner-approximation computation are generally categorized into two main groups: constraint solving methods and set-propagation methods. Constraint solving methods avoid the explicit computation of reachable sets, but have to address a set of quantified constraints, which are generally constructed via Lyapunov functions \cite{branicky1998multiple}, occupation measures \cite{korda2013inner} and  equations relaxation \cite{xue2019inner,xue2022reach}. However, solving these quantified constraints is usually computationally intensive (except the case of polynomial constraints for which there exists advanced tools such as semi-definite programming). 

The set propagation method is an extension of traditional numerical methods for solving ODEs using set arithmetic rather than point arithmetic. 
While this method is simple and interesting, a major challenge is the propagation and accumulation of approximation errors over time.
To ease this issue efficiently, various methods employing different representations have been developed. \cite{rwth2014under} presented a Taylor model backward flowpipe method that computes inner-approximations by representing them as the intersection of polynomial inequalities. However,  this approach relied on a computationally expensive interval constraint propagation technique to ensure the validity of the representation. In \cite{goubault2017forward}, an approach is proposed to compute interval inner-approximations of the projection of the reachable set onto the coordinate axes for autonomous nonlinear systems. This method is later extended to systems with uncertain inputs in \cite{goubault2019inner}. However, they cannot compute an inner-approximation of the entire reachable set, as studied in the present work. \cite{xue2016under} proposed a set-boundary reachability method which propagates the initial set's boundary to compute an polytopic inner-approximation of the reachable set.  However, it used computationally expensive interval constraint satisfaction techniques to compute a set of intervals which outer-approximates the initial set's boundary. Recently, inspired by the computational procedure in \cite{xue2016under}, \cite{kochdumper2020computing} introduced  a promising method based on polynomial zonotopes to compute inner-approximations of reachable sets for systems with an initial set in interval form. The method presented in this work is also inspired by the in \cite{xue2016under}. 
However, we propose efficient and accurate algorithms for extracting and refining the boundary of the initial set represented by zonotopes
and an adaptive strategy for contracting outer-approximations, facilitating the computation of non-conservative inner-approximations. 
\subsubsection{Splitting and tiling of zonotopes}
To mitigate wrapping effect \cite{neumaier1993wrapping} and enhance computed results, it is a common way to split a zonotope into smaller zonotopes during computation. 
Despite zonotopes being special convex polytopes with centrally symmetric faces in all dimensions \cite{ziegler2012lectures}, traditional polytope splitting methods such as \cite{bajaj1996splitting,herrmann2008splitting} cannot be directly applied. The results obtained through these approaches are polytopes, not necessarily zonotopes.
In the works \cite{althoff2008reachability,wan2009numerical}, they split a zonotope by bisecting it along one of its generators. However, the sub-zonotopes split by this way often have overlap parts, resulting in loss of precision and heavy computation burden. Hence, there is a pressing need for methods that split a zonotope into non-overlapping sub-zonotopes. The problem of zonotopal tiling, i.e., paving a zonotope by tiles (sub-zonotopes) without gaps and overlaps, is an important topic in combinatorics and topology \cite{bjorner1999oriented,ziegler2012lectures}. In the realm of zonotopal tiling, Bohne-Dress theorem \cite{richter1994zonotopal} plays a crucial role by proving that a tiling of a zonotope can be uniquely represented by a collection of sign vectors or oriented matroid. Inspired by this theorem, \cite{kabi2020synthesizing} developed a tiling method  by enumerating the vertices of the tiles as sign vectors of the so-called hyperplane arrangement \cite{mcmullen1971zonotopes} corresponding to a zonotope. However, in this paper we provide a novel and more accessible method for constructing a zonotopal tiling, which has better computational complexity.

The remainder of this paper is organized as follows. The inner-approximate reachability problem of interest is presented in Sect. \ref{preliminaries}. Then, we elucidate our reachability computational approach in Sect. \ref{method} and evaluate it in Sect. \ref{experiment}. Finally, we summarize the paper in Sect. \ref{conclusion}.
\section{Preliminaries}
\label{preliminaries}
\subsection{Notation}

The notations and operations concerning space, vectors, matrices, and sets utilized in this paper are presented in Table \ref{notation}, where the symbols and descriptions for operations on vectors, matrices, and sets are mainly illustrated with specific examples of a vector $\bm{x}$, a matrix $\bm{M}$, and a set $\Delta$.

\begin{table}[htbp]
       \caption{Notations utilized in the paper}
    \centering
    \setlength{\tabcolsep}{1.2mm}{
    \begin{tabular}{cccc}
  \toprule
   \textbf{Symbol}&\textbf{Description} &\textbf{Symbol}& \textbf{Description}\\
  \midrule
    $\mathbb{R}^{k}$ &   $k$-dimensional real space & $\mathbb{R}^{m,n}$ & space of $m \times n$ real matrices \\
    $\mathbb{N}_{[m,n]}$ & non-negative integers in $[m,n]$ &$\bm{x}_1\cdot\bm{x}_2$ & inner product of $\bm{x}_1$ and $\bm{x}_2$\\
    $\bm{x},\bm{y},\cdots$ & vectors, boldface lowercase &$\bm{M},\bm{N},\cdots$ &matrices, boldface uppercase  \\
       $\bm{0}$ &vectors with all zero entries  & $\bm{1}$ & vectors with all one entries\\
    $\bm{M}(i, \cdot)$  & $i$-th row vector of  $\bm{M}$&$\bm{M}(\cdot, j)$ &$j$-th column vector of  $\bm{M}$\\
    $\bm{x}(i)$ & $i$-th entry of  $\bm{x}$ & $\bm{M}(i, j)$ & $j$-th entry in $i$-th row of  $\bm{M}$\\
    $\text{rows}(\bm{M})$ & number of rows of  $\bm{M}$&$\text{cols}(\bm{M})$ & number of  columns of  $\bm{M}$\\
    $\bm{M}(-1,\cdot)$ & last row  of $\bm{M}$&$\bm{M}(\cdot,-1)$ &last column of $\bm{M}$\\
    $\bm{M}^{[i]}$ &delete $i$-th row of  $\bm{M}$ & $\bm{M}^{\langle i \rangle}$ & delete $i$-th column of $\bm{M}$\\
     $[\bm{M}; \bm{x}^{\intercal}]$ &add  $\bm{x}$ to  last row  of $\bm{M}$ &$(\bm{M}, \bm{x})$ & add  $\bm{x}$ to  last column  of $\bm{M}$\\
    $\rank(\bm{M})$ & rank of $\bm{M}$&$\Vert \bm{x} \Vert$ & norm of $\bm{x}$\\
    $\Delta^\circ$& interior of set $\Delta$  &$\partial \Delta$& boundary of set $\Delta$ \\
    $\vert\Delta\vert$ & cardinality of set $\Delta$ &$\mathcal{S}_1 \setminus \mathcal{S}_2$ &$\{s~|~s\in \mathcal{S}_1 \wedge s\notin \mathcal{S}_2\}$\\
  \bottomrule
\end{tabular}}
    \label{notation}
\end{table}

\subsection{Problem Statement}

This paper considers nonlinear systems which are modelled by ordinary differential equations of the following form:
\begin{equation}
\label{system}
    \dot{\bm{x}} = \bm{f}(\bm{x}) 
\end{equation}
where $\bm{x} \in \mathbb{R}^n$ and $\bm{f}$ is a locally Lipschitz continuous function. Thus, given an initial state $\bm{x}_0$, there exists an unique solution $\phi(\cdot;\bm{x}_0): [0,T_{\bm{x}_0})\rightarrow \mathbb{R}^n$ to system \eqref{system}, where $[0,T_{\bm{x}_0})$ is the maximal time interval on which  $\phi(\cdot;\bm{x}_0)$ is defined.



Given a set $\mathcal{X}_0$ of initial states, the reachable set is defined as follows:
\begin{definition}[Reachable Set]
\label{Reachable Set}
Given system \eqref{system} and an initial set $\mathcal{X}_0$, the reachable set at time $t>0$ is
\[\Phi(t;\mathcal{X}_0) \triangleq \{ \phi(t;\bm{x}_0) \mid \bm{x}_0 \in \mathcal{X}_0 \}.\]
\end{definition}
The exact reachable set $\Phi(t;\mathcal{X}_0)$ is usually impossible to be computed, especially for nonlinear systems. Outer-approximations and inner-approximations are often computed for formal reasoning on the system. 
\begin{definition}
  Given an initial set $\mathcal{X}_0$ and a time instant $t> 0$, an outer-approximation $O(t;\mathcal{X}_0)$ of the reachable set $\Phi(t;\mathcal{X}_0)$ is a superset of the set $\Phi(t;\mathcal{X}_0)$, i.e., \[\Phi(t;\mathcal{X}_0)\subseteq O(t;\mathcal{X}_0);\]  an inner-approximation $U(t;\mathcal{X}_0)$ of the reachable set $\Phi(t;\mathcal{X}_0)$ is a subset of the set $\Phi(t;\mathcal{X}_0)$, i.e., \[U(t;\mathcal{X}_0) \subseteq \Phi(t;\mathcal{X}_0).\]
\end{definition}

In this paper, we focus on the computation of an inner-approximation represented by zonotopes. Zonotope is a special class of convex polytopes with the centrally symmetric nature. It can be viewed as a Minkowski sum of a finite set of line segments, known as {\textit{G-representation}}, which is defined as the following.
\begin{definition}[Zonotope]
\label{Zonotope}
A zonotope $Z$ with $p$ generators is a set
\begin{equation*}
    \begin{aligned}
Z&=\left\{ \bm{x} \in \mathbb{R}^{n} \Big|  \bm{x}=\bm{c}+\sum\nolimits_{i=1}^{p} \alpha_{i} \cdot \bm{g_i}, -1 \leq \alpha_{i} \leq 1\right\} \\
&=\left\{ \bm{x} \in \mathbb{R}^{n} \Big|  \bm{x}=\bm{c}+\bm{G\alpha}, -\bm{1} \leq \bm{\alpha} \leq \bm{1}\right\},  
    \end{aligned}
\end{equation*}
denoted by $Z=\left \langle \bm{c},\bm{G} \right \rangle$, where $\bm{c}\in \mathbb{R}^n$ is referred as center and $\bm{g}_1, \cdots, \bm{g}_p \in \mathbb{R}^n$ as generators of zonotope. $\bm{G} = (\bm{g}_i)_{1 \leq i\leq p} \in \mathbb{R}^{n,p}$ is called generator matrix. 
\end{definition}


For a zonotope $Z=\left \langle \bm{c},\bm{G} \right \rangle$ in space $\mathbb{R}^n$, 
it is called {\textit{$k$-dimensional}} if $\rank(\bm{G}) = k, k \leq n$. A $k$-dimensional zonotope can be reduced into space $\mathbb{R}^k$ without altering its shape. Furthermore, the facets of a $k$-dimensional ($k\geq 1$) zonotope are $(k-1)$-dimensional zonotopes. 
If an $n$-dimensional zonotope has $n$ independent generators, then it's called {\textit{parallelotope}}. Additionally, If there is a zonotope $Z^\prime$ such that $Z^\prime \subsetneqq Z$, $Z^\prime$ is called a {\textit{sub-zonotope}} of $Z$.

\section{Methodology}
\label{method}
In this section we introduce our set-boundary reachability method to compute inner-approximations of reachable sets. Firstly, the framework of our method is presented in Subsection \ref{framework}. Then, we introduce the algorithm of extracting the exact boundary of a zonotope in Subsection \ref{boundary}, 
the tiling algorithm for boundary refinement in Subsection \ref{refinement} and the strategy for computing an inner-approximation via contracting an outer-approximation in Subsection \ref{contraction}.  

\subsection{Inner-approximation Computation Framework}
\label{framework}

The framework of computing inner-approximations in this paper 
follows the one proposed in \cite{xue2016under}, but with minor modifications. 

Given system \eqref{system} with an initial set $\mathcal{X}_0$, represented by a zonotope, and a time duration $T = Nh$, where $h>0$ is the time step and $N$ is a non-negative integer, we compute a zonotopic inner-approximation $U((k+1)h;\mathcal{X}_0)$  of the reachable set $\Phi((k+1)h;\mathcal{X}_0)$ for $k\in\{0,1,\cdots,N\}$.  The inner-approximation $U_{k+1}=U((k+1)h;\mathcal{X}_0)$ is computed based on $U_{k}=U(kh;\mathcal{X}_0)$ ($U_{0}:= \mathcal{X}_0)$ with the following procedures: 
\begin{enumerate}
    \item extract and refine the boundary $\partial U_{k}$ of $U_{k}$ ;
    \item compute a zonotopic outer-approximation $O(h;U_{k})$ of reachable set $\Phi(h;U_{k})$, and an outer-approximation $O(h;\partial U_{k})$  of  reachable set $\Phi(h;\partial U_{k})$. These outer-approximations can be computed using existing 
    zonotope-based approaches such as \cite{althoff2008reachability}.
    \item contract $O(h;U_k)$ to obtain a zonotopic inner-approximation candidate $U'_{k+1}$ by excluding the set $O(h;\partial U_k)$, i.e., let $U'_{k+1} \cap O(h;\partial U_k) = \emptyset$.
    \item compute an outer-approximation of the reachable set $O(h;\bm{c})$ of the time-inverted system $\dot{\bm{x}}=-\bm{f}(\bm{x})$ with the single initial state $\bm{c}$, where $\bm{c}$ is the center of the zonotope $U'_{k+1}$. If the computed outer-approximation $O(h;\bm{c})$ is included in the set $U_k$, then $U_{k+1}:=U'_{k+1}$ is an inner-approximation of  the reachable set $\Phi((k+1)h;\mathcal{X}_0)$.
\end{enumerate}

\begin{figure}[htbp]
    \centering
        \includegraphics[width=4in]{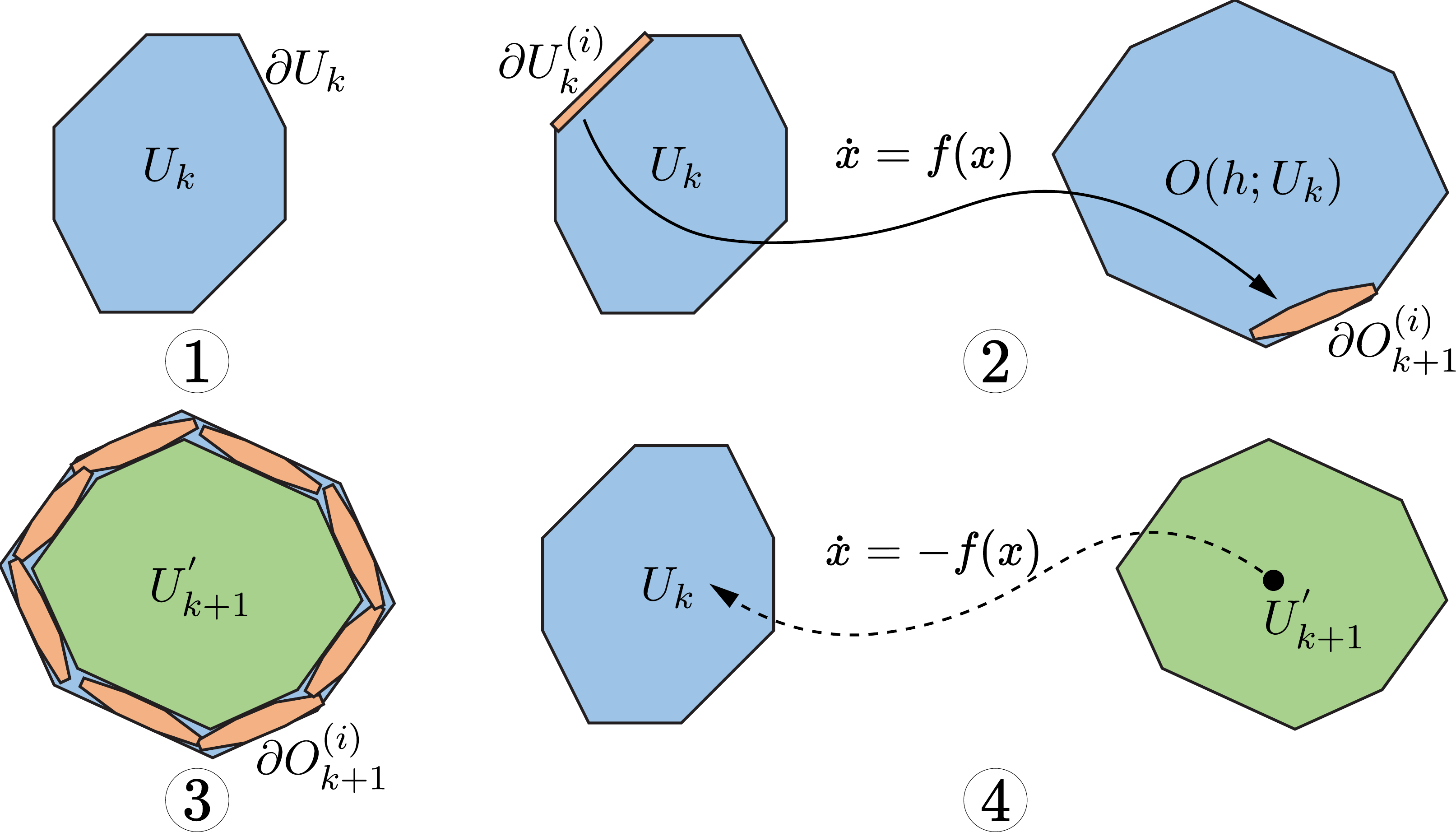}
    \caption{
    Illustration of inner-approximation computation framework}
    \label{fig:framework}
 
\end{figure}

The overall computational workflow is visualized in Fig. \ref{fig:framework}. There are three computational procedures that affect the efficacy (i.e., accuracy and efficiency) of  inner-approximation computation in the aforementioned framework: the extraction and refinement of the boundary $\partial U_k$, reachability analysis for computing outer-approximations $O(h;U_k), O(h;\partial U_k)$, and contraction of $O(h;U_k)$ to obtain an inner-approximation candidate $U_{k+1}^\prime$. Since there are well-developed reachability algorithms in existing literature for computing outer-approximations such as \cite{girard2005reachability,althoff2008reachability}, we in the following focus on other two computational procedures. 
For the first one, as the outer-approximation computed $O(h;\partial U_k)$ would be excluded from $O(h;U_k)$, the accuracy of $O(h;\partial U_k)$ 
significantly affects the one of $U_{k+1}$. Additionally, the accuracy of  $O(h;\partial U_k)$ strongly correlates with the size of $\partial U_k$. To improve the accuracy of $U_{k+1}$, two algorithms are proposed: one for extracting and the other for tiling the boundary of a zonotope (i.e., splitting the boundary into sub-zonotopes without overlaps). As for the third one, an adaptive strategy is developed to make the inner-approximation $U_{k+1}$ much tighter. This is achieved by contracting $O(h;U_k)$ in a flexible way, deviating from the proportional reduction of the size of $O(h;U_k)$ in the existing methods \cite{xue2016under}.
\subsection{Extraction of Zonotopes' Boundaries}
\label{boundary}

In this subsection we introduce the algorithm for extracting the exact boundary of a zonotope. The concept of cross product of a matrix provided by \cite{mortari1997n} will be utilized herein, which is formulated below.
\begin{definition}[Cross Product]
\label{def:cross product}
Given a matrix $\bm{M} \in \mathbb{R}^{n,n-1}$ in which the column vectors are linearly independent.  The cross product of $\bm{M}$ is a vector of the following form:
$$\mathtt{CP}(\bm{M}) = \left( \operatorname{det}\left(\bm{M}^{[1]}\right),
\cdots,(-1)^{i+1} \operatorname{det}\left(\bm{M}^{[i]}\right),  \cdots,(-1)^{n+1} \operatorname{det}\left(\bm{M}^{[n]}\right)
\right)^\intercal,$$
where $\operatorname{det}(\cdot)$ is the determinant of a matrix. 
\end{definition}
The cross product of $\bm{M} \in \mathbb{R}^{n,n-1}$ can be viewed as the normal vector of the hyperplane spanned by $n-1$ linearly independent column vectors in $\bm{M}$.

\begin{figure}[htbp]
    \centering
    \includegraphics[width = 2.1in]{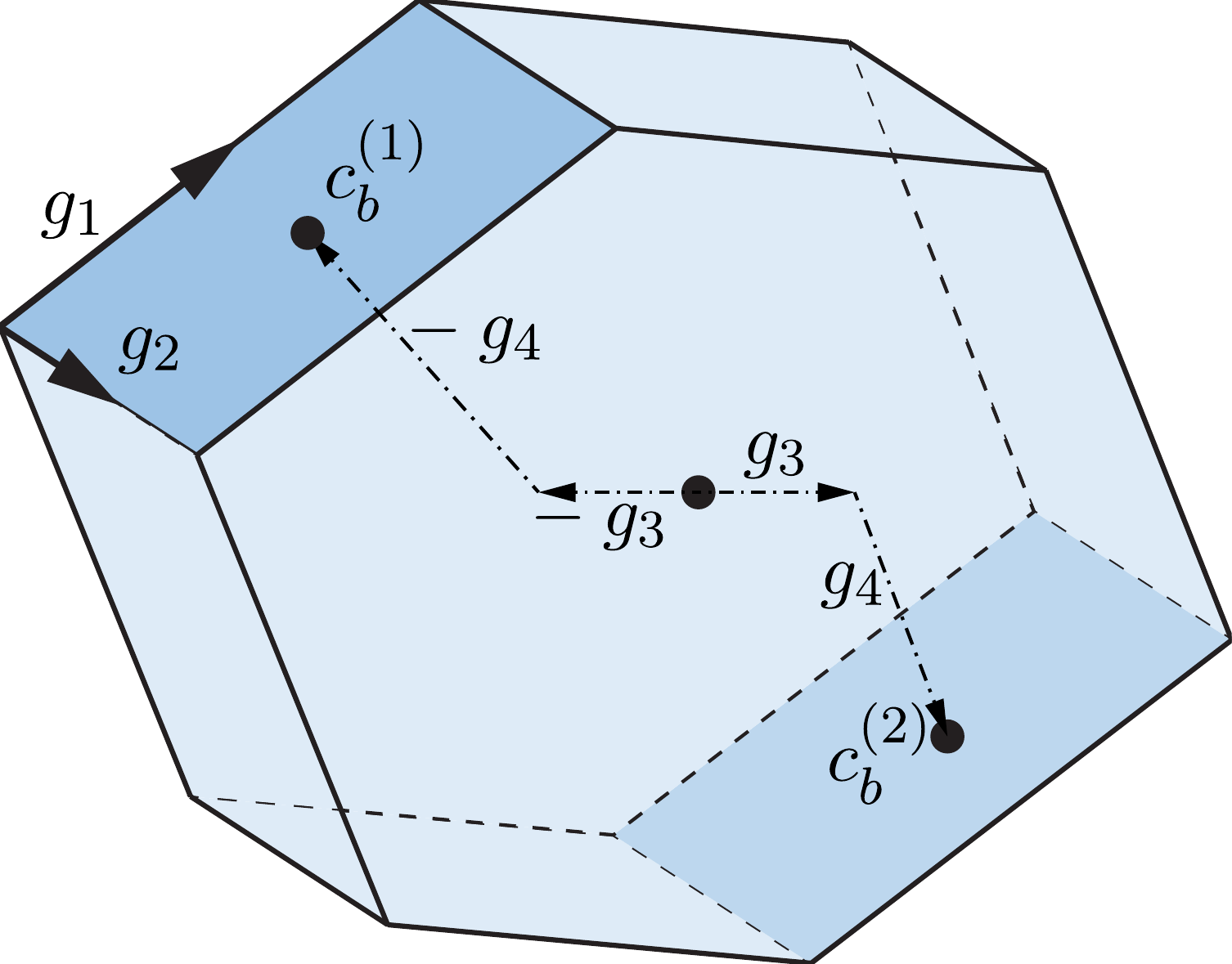}
    \caption{Illustration of boundary extraction algorithm}
    \label{fig:boundary extraction algorithm}
\end{figure}

The boundary extraction algorithm is established on the fact that  a zonotope is centrally symmetric and each facet, which is a zonotope, has an congruent facet on the opposite side of the center (e.g., two dark blue facets in Fig.  \ref{fig:boundary extraction algorithm}).


Given an $n$-dimensional zonotope $Z=\left \langle \bm{c},\bm{G} \right \rangle$, where $\bm{c}\in \mathbb{R}^n$ and $\bm{G} = (\bm{g}_i)_{1 \leq i\leq p} \in \mathbb{R}^{n,p}$, for each two symmetric facets, they lie in parallel hyperplanes and share the same generators. The two parallel hyperplanes are spanned by a part of generators of $Z$, which can form a submatrix of $\bm{G}$ with rank $n-1$. In boundary extraction algorithm, i.e., Alg. \ref{alg: boundaries of a zonotope}, we  firstly enumerate all potential $n \times (n-1)$ submatrices of $\bm{G}$ which are able to span a hyperplane. For a certain hyperplane spanned by a submatrix $\bm{B}_b$, to confirm the center and generators of its corresponding facets, we compute its normal vector by the cross product operator $\mathtt{CP}(\cdot)$, then the center of the two symmetric facets can be respectively determined by moving the center $\bm{c}$ along the positive and negative directions of generators which are not perpendicular to $\mathtt{CP}(\bm{B}_b)$, and the generator matrix of these corresponding facets can be represented by $\bm{B}_b$ appending generators parallel to the hyperplane. The visible operations stated above are shown in Fig. \ref{fig:boundary extraction algorithm}.

\begin{algorithm}[tp]

\caption{Boundary Extraction Algorithm}
\label{alg: boundaries of a zonotope}
\textbf{Input}: An $n$-dimensional zonotope $Z=\left \langle \bm{c},\bm{G} \right \rangle, \bm{G} = (\bm{g}_i)_{1 \leq i\leq p} \in \mathbb{R}^{n \times p}$.\\
\textbf{Output}: The boundary of the zonotope $Z$, i.e., $\partial Z$.
\begin{algorithmic}[1] 

\STATE $\partial Z:= \emptyset$

\STATE $\mathcal{B}:= \{\bm{B}_b =(\bm{g}_i)_{i \in \{k_1,\cdots, k_{n-1}\}}, 1 \leq k_1 < \cdots < k_{n-1} \leq p\mid \rank(\bm{B}_b) = n-1\}$
\WHILE{ $\mathcal{B} \neq \emptyset$}
\STATE{$\bm{v} := \mathtt{CP}(\bm{B}_b)$}
\STATE $\bm{B} := \bm{B}_b = (\bm{g}_{k_1}, \bm{g}_{k_2}, \cdots, \bm{g}_{k_{n-1}}) \in \mathcal{B}$, $\bm{c}_b^{(i)}  := \bm{c}, i \in \{1,2\}$



\FORALL{$\bm{g}_k=G(\cdot,k), k \in \{1,2,\cdots,p\} \setminus \{k_1,k_2,\cdots,k_{n-1}\}$}
\IF{$\bm{v} \cdot \bm{g}_k = 0$}
\STATE $\bm{B}:= (\bm{B}, \bm{g}_k)$
\ELSIF{$\bm{v}\cdot \bm{g}_k > 0$}
\STATE$\bm{c}_{b}^{(i)} := \bm{c}_{b}^{(i)} + (-1)^{i}\bm{g}_k$, $i\in\{1,2\}$
\ELSE 
\STATE $\bm{c}_{b}^{(i)}:= \bm{c}_{b}^{(i)} - (-1)^{i}\bm{g}_k$, $i\in\{1,2\}$
\ENDIF

\STATE $Z_b^{(i)}: = \left \langle \bm{c}_{b}^{(i)}, \bm{B} \right \rangle$, $i\in\{1,2\}$
\STATE $\partial Z:=\partial Z \cup \{Z_b^{(1)},Z_b^{(2)}\}$
\ENDFOR

\FORALL{$\bm{B}_b \in \mathcal{B}$}

\IF{$\bm{B}_b$ is a submatrix of $\bm{B}$}
\STATE $\mathcal{B}: = \mathcal{B} \setminus \{\bm{B}_b\}$
\ENDIF
\ENDFOR
\ENDWHILE

\STATE \textbf{return}  $\partial Z$

\end{algorithmic}

\end{algorithm}

The computation of a zonotope's boundary is summarized in Alg. \ref{alg: boundaries of a zonotope}. Its soundness, i.e., the set computed by Alg. \ref{alg: boundaries of a zonotope} is equal to the boundary $\partial Z$ of the zonotope $Z$,  is justified in Theorem \ref{thm: Boundary of a Zonotope}, whose proof is available in Appendix \ref{app: proof}.
In order to enhance the understanding of Alg. \ref{alg: boundaries of a zonotope}, we provide a simple example, Example \ref{ex:boundary} in Appendix \ref{app:extra examples}, to illustrate the computational process of Alg. \ref{alg: boundaries of a zonotope}.

\begin{remark}

    In space $\mathbb{R}^n$, if  a zonotope $Z=\left \langle \bm{c},\bm{G} \right \rangle$ isn't $n$-dimensional, i.e., $\rank(\bm{G}) < n$, then the boundary of this zonotope is itself. 
\end{remark}




\color{black}


\color{black}

\begin{theorem}[Soundness of boundary extraction algorithm]
\label{thm: Boundary of a Zonotope}
Given an $n$-dimensional zonotope $Z=\left \langle \bm{c},\bm{G} \right \rangle$ with $p$ generators, the set computed by Alg. \ref{alg: boundaries of a zonotope} is equal to its boundary $\partial Z$.
\end{theorem}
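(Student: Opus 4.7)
The plan is to establish the set equality $\partial Z = \bigcup_b \{Z_b^{(1)}, Z_b^{(2)}\}$ by first characterizing the facets of the zonotope $Z$ in generator terms, and then matching that characterization line-by-line against Algorithm \ref{alg: boundaries of a zonotope}. Since $Z$ is an $n$-dimensional convex polytope, its boundary equals the union of its facets, so it suffices to show that the algorithm outputs exactly the collection of facets of $Z$.

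First I would prove a structural lemma on facets. Every facet of $Z$ has the form $Z\cap H$ for a supporting hyperplane $H=\{\bm{x}:\bm{v}\cdot\bm{x}=\mu\}$ with $\mu=\max_{\bm{y}\in Z}\bm{v}\cdot\bm{y}$ (or its symmetric minimum). Substituting $\bm{x}=\bm{c}+\bm{G}\bm{\alpha}$ with $\bm{\alpha}\in[-\bm{1},\bm{1}]$ and using separability of the linear functional in the $\alpha_k$, the maximum is attained precisely when $\alpha_k=\mathrm{sign}(\bm{v}\cdot\bm{g}_k)$ for every $k$ with $\bm{v}\cdot\bm{g}_k\neq 0$, while $\alpha_k$ ranges freely in $[-1,1]$ for those $k$ with $\bm{v}\cdot\bm{g}_k=0$. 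The facet is therefore itself a zonotope whose generators are exactly the $\bm{g}_k$ orthogonal to $\bm{v}$ and whose center is $\bm{c}+\sum_{\bm{v}\cdot\bm{g}_k>0}\bm{g}_k-\sum_{\bm{v}\cdot\bm{g}_k<0}\bm{g}_k$. For this facet to have dimension $n-1$, the family of generators orthogonal to $\bm{v}$ must have rank $n-1$, which in turn determines $\bm{v}$ up to sign via Definition \ref{def:cross product}.

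Matching this against the algorithm yields both directions. For soundness, fix an outer-loop iteration with rank-$(n-1)$ submatrix $\bm{B}_b$ and set $\bm{v}:=\mathtt{CP}(\bm{B}_b)$. By construction $\bm{v}$ is orthogonal to every column of $\bm{B}_b$, and the inner loop sorts the remaining generators into those appended to $\bm{B}$ (the orthogonal ones) and those that shift $\bm{c}_b^{(1)},\bm{c}_b^{(2)}$ by $\pm\bm{g}_k$. A direct check of the $(-1)^i$ signs in lines 10--12 shows that $Z_b^{(2)}$ realizes exactly the max-supporting facet predicted by the lemma and $Z_b^{(1)}$ its centrally symmetric partner. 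For completeness, given any facet with normal $\bm{v}$, any $n-1$ linearly independent columns taken from $\{\bm{g}_k:\bm{v}\cdot\bm{g}_k=0\}$ form an element of the initial collection $\mathcal{B}$ whose cross product is a nonzero scalar multiple of $\bm{v}$, so the corresponding iteration recovers the facet.

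The step I expect to be the main obstacle is the cleanup loop on lines 16--19, which deletes from $\mathcal{B}$ every submatrix of the augmented $\bm{B}$. All such deleted submatrices share the column span of $\bm{B}$ and hence the same facet normal $\bm{v}$ up to sign, so they would only re-emit the same pair of facets. Nevertheless, I would need to verify, by induction on $|\mathcal{B}|$, that (i) for every facet of $Z$ at least one rank-$(n-1)$ subfamily of orthogonal generators survives until it is processed, so that no facet is dropped, and (ii) after each iteration the equivalence class of hyperplanes just handled is fully removed, preventing spurious re-emission. This partition-traversal bookkeeping is the only nontrivial ingredient; once it is settled, the rest of the argument reduces to the algebraic matching of centers and generators described above.
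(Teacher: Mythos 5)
Your argument is correct, but it takes a genuinely different route from the paper's proof. You characterize facets via supporting hyperplanes: maximizing $\bm{v}\cdot\bm{x}$ over $\bm{x}=\bm{c}+\bm{G\alpha}$ separates over the coordinates $\alpha_k$, so the face of $Z$ in direction $\bm{v}$ is the zonotope with generators $\{\bm{g}_k:\bm{v}\cdot\bm{g}_k=0\}$ and center $\bm{c}+\sum_{\bm{v}\cdot\bm{g}_k>0}\bm{g}_k-\sum_{\bm{v}\cdot\bm{g}_k<0}\bm{g}_k$, and it is a facet exactly when the orthogonal generators have rank $n-1$ (the rank can never exceed $n-1$, since those generators lie in $\bm{v}^\perp$), which ties each facet normal, up to sign, to the cross product of Definition \ref{def:cross product}. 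The paper instead first proves Lemma \ref{lemma:boundaries_paeallelotope} for parallelotopes by a homeomorphism of $[-1,1]^n$ onto $P$ (interior maps to interior, boundary to boundary) and then argues incrementally, adding one generator at a time with a case split on whether the new generator is orthogonal to a facet's normal (which enlarges the facet) or not (which translates it away from the center), before generalizing to arbitrary $Z$ in a final, somewhat informal paragraph. Your support-function lemma buys both inclusions at once — soundness and completeness of the emitted facet list — in one standard convex-geometry step, and is arguably tighter than the paper's induction on the number of generators; the paper's route, in exchange, mirrors the data flow of Alg. \ref{alg: boundaries of a zonotope} more closely. The one piece you defer, correctness of the cleanup loop in lines 16--19, is discharged precisely by the observation you already make: an element of $\mathcal{B}$ is a submatrix of the augmented $\bm{B}$ iff all its columns are orthogonal to $\bm{v}$, iff its cross product is collinear with $\bm{v}$; hence each deletion removes exactly the equivalence class of the normal just processed, and only after its facet pair has been emitted, so no facet is dropped, none is emitted twice, and the while loop terminates. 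This bookkeeping is routine once stated — and the paper's own proof does not treat it explicitly at all, so your flagging it is, if anything, more careful than the original.
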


\subsubsection{The complexity of boundary extraction algorithm}
For an $n$-dimensional zonotope $Z=\left \langle \bm{c},\bm{G} \right \rangle, \bm{G} \in \mathbb{R}^{n, p}$, it has $M$ facets, where $M \leq \binom{p}{n-1}$.
The number of $n\times (n-1)$ submatrices of $\bm{G}$ is $\binom{p}{n-1}$, and the computation of the rank of an $n \times (n-1)$ matrix has the complexity $O(n(n-1)^2)$ (using QR decomposition), then the computation in Line 2 has the complexity $O(n(n-1)^2\binom{p}{n-1})$. In ``while'' Loop (Line 3-22), it has $\frac{M}{2}$ iterations. 
For the operation $\mathtt{CP}(\cdot)$ on an $n \times (n-1)$ matrix, its complexity is $n\DET(n-1)$, where $\DET(n)$ denote the complexity of computing a determinant of an $n\times n$ square matrix. By LU-decomposition, $\DET(n)$ is $O(n^3)$, however by Coppersmith–Winograd algorithm \cite{fisikopoulos2016faster}, it can reach $O(n^{2.373})$. For each $\bm{B}_b \in \mathcal{B}$, checking the inner product between $\bm{v}$ and remaining generators has $p-n+1$ loops, and the inner product has complexity $O(n)$. Thus, the complexity of Alg. \ref{alg: boundaries of a zonotope} is $\frac{M}{2}\left(n\DET(n-1)+n(p-n+1)\right)+O(n(n-1)^2\binom{p}{n-1}) = O\left(Mn(\DET(n-1)+p)+n(n-1)^2\binom{p}{n-1}\right)$.

\subsection{Zonotopal Tiling and Boundary Refinement}
\label{refinement}




\color{black}
This subsection introduces our tiling algorithm which can split a zonotope into sub-zonotopes without overlaps and then elaborates how this tiling algorithm is employed to refine the boundaries of zonotopes.

 The boundary matrix, which is constructed according to Alg. \ref{alg: boundaries of a zonotope}, plays an important role in our tiling algorithm. Its entries are able to characterize the centers and generators for all facets of a zonotope.  
\begin{definition}[Boundary Matrix]
\label{def boundary matrix}
Given an $n$-dimensional zonotope $Z=\left \langle \bm{c},\bm{G} \right \rangle$ with $M$ facets, where $\bm{c} \in \mathbb{R}^n$ and $\bm{G} \in \mathbb{R}^{n,p}$, its boundary matrix $\bm{B} \in \mathbb{R}^{M,p}$ is a matrix whose each entry is 0,1,or -1, where 
\begin{itemize}
    \item[1.] $\bm{B}(i,j)=0$ implies that the $j$-th generator $\bm{g}_j$ is a  generator of the $i$-th facet (corresponding to Line 8 in Alg. \ref{alg: boundaries of a zonotope});
    \item[2.] $\bm{B}(i,j)=-1$ implies that in order to obtain the center of  
    the $i$-th facet, the MINUS operator is applied to the $j$-th generator $\bm{g}_j$ (corresponding to Line 10 and 12 in Alg. \ref{alg: boundaries of a zonotope}); 
    \item[3.] $\bm{B}(i,j)=1$ implies that in order to obtain the center of  
    the $i$-th facet, the PLUS operator is applied to the $j$-th generator $\bm{g}_j$  (corresponding to Line 10 and 12 in Alg. \ref{alg: boundaries of a zonotope}).
\end{itemize}
\end{definition}




From the boundary matrix of a zonotope, one can obtain all its facets. Appendix \ref{app:extra examples} provides an example (Example \ref{ex:boudary matrix}) to illustrate this claim.



{


Another matrix, tiling matrix, is constructed to store the outcomes of the tiling algorithm, i.e., all the non-overlapping sub-zonotopes whose union covers the original zonotope. Similar to the boundary matrix, a row of tiling matrix represents a sub-zonotope.
\begin{definition}[Tiling Matrix]
\label{def:partition matrix}
Given an $n$-dimensional zonotope $Z=\left \langle \bm{c},\bm{G} \right \rangle$, where $\bm{c} \in \mathbb{R}^n$ and $\bm{G} \in \mathbb{R}^{n,p}$, its tiling matrix $\bm{T} \in \mathbb{R}^{s,p}$ is a matrix satisfying the following conditions:
\begin{enumerate}
    \item its each entry is 0,1 or -1, which has the same meaning with the one in the boundary matrix;
    \item each row defines a sub-zonotope $Z_i$ such that  $\bigcup_{i=1}^s   Z_i=Z$ and $Z^{\circ}_i\cap Z^{\circ}_j=\emptyset$ for $i\neq j$.
\end{enumerate}
\end{definition}

Our tiling algorithm is based an intuitive observation: for a zonotope, moving its one-sided facets towards to the opposite side along the direction of a generator results in a new zonotope with this generator removed, simultaneously, several sub-zonotopes are generated by adding this generator to all these facets. This process, which is visualized in Fig. \ref{fig:tiling of a zonotope}, can be iteratively conducted, until a parallelotope remains. At this point, the tiling algorithm terminates, yielding a collection of tiles denoted as zonotopes that tile the original zonotope. 

\begin{figure}[htbp]
    \centering
    \includegraphics[width=0.9\linewidth]{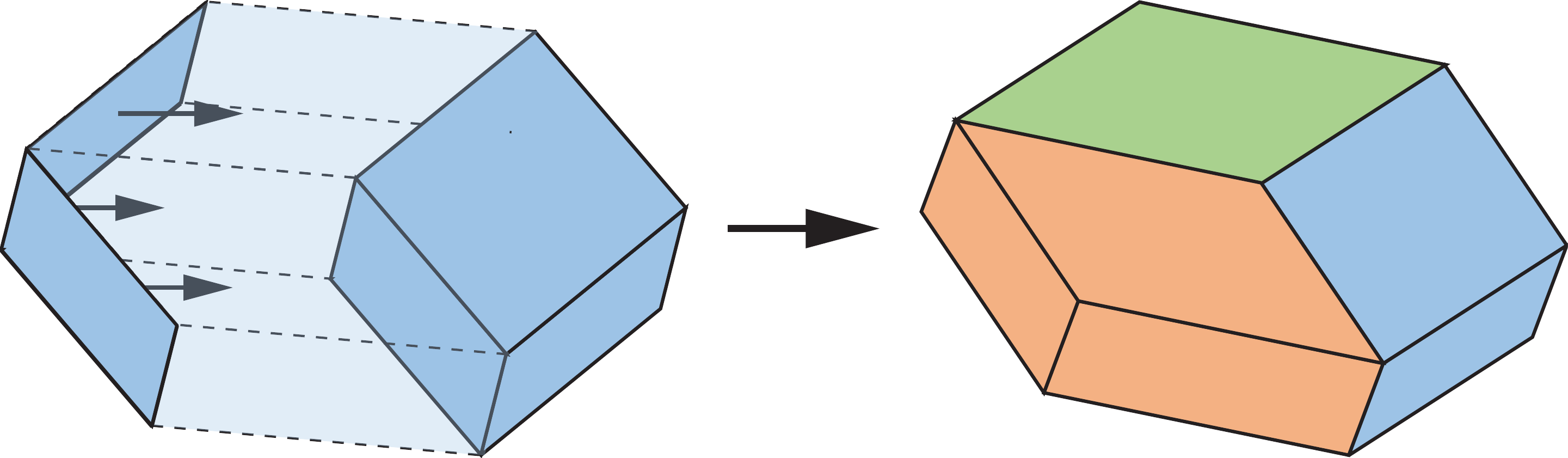}
    \caption{Illustration of one-step tiling}
    \label{fig:tiling of a zonotope}
\end{figure}
The tiling algorithm leverages operations on boundary matrix $\bm{B}$ to implement the facets' movement and sub-zonotopes generation aforementioned. The results of each step, namely the sub-zonotopes after one-step tiling, are recorded in the tiling matrix $\bm{T}$.
 
Given an $n$-dimensional zonotope $Z=\left \langle \bm{c},\bm{G} \right \rangle,$ where $\bm{G} = (\bm{g}_i)_{1 \leq i\leq p} \in \mathbb{R}^{n,p}$, we require that the right-most $n\times n$ submatrix of $\bm{G}$ is full rank to ensure that the sub-zonotopes with one generator removed after one-step tiling remain $n$-dimensional. 
For the specific  $j$-th column of boundary matrix $\bm{B}$, where $ 1\leq j\leq (p-n)$, we process the following operations to its entries:
\begin{enumerate}
    \item if there exist $i$'s such that $\bm{B}(i,j)=-1$, we add these rows in the boundary matrix $\bm{B}$  into the tiling matrix $\bm{T}$ as new rows, but change their $j$-th entry to $0$ in the tiling matrix $\bm{T}$. Meanwhile, the $j$-th entries of these rows in the boundary matrix $\bm{B}$ are modified into $1$, i.e., $\bm{B}(i,j)=1$; 
    \item if there exist $i$'s such that $\bm{B}(i,j)=0$, we delete these rows from the boundary matrix $\bm{B}$.
\end{enumerate}
After the $j$-th iteration, the updated boundary matrix $\bm{B}$ characterizes the boundary of a new zonotope. This new zonotope is derived by removing the first through the $j$-th generators from the original zonotope $Z$. Simultaneously, the sub-zonotopes generated by adding the generator $\bm{g}_j$ to the facets are incorporated into the tiling matrix $\bm{T}$. Finally, after $p-n$ iterations, there remains one parallelotope, whose generator matrix is the right-most $n\times n$ submatrix of $\bm{G}$, we put this parallelotope into the tiling matrix $\bm{T}$ and then output the result. 

The above computational procedures are summarized in Alg. \ref{alg: partition of a zonotope}. Its soundness is justified by Theorem \ref{thm: partition of a Zonotope}, whose proof is available in Appendix \ref{app: proof}.
Moreover, Appendix \ref{app:extra examples} supplements an example (Example \ref{ex tiling}) to illustrate the main steps tiling a zonotope using Alg. \ref{alg: partition of a zonotope}.

\begin{remark}
    For an $n$-dimensional parallelotope, Alg. \ref{alg: partition of a zonotope} only return itself since there is no generator to remove while keeping it $n$-dimensional. However, one can use some simple methods to tile it such as parallelepiped grid. 
\end{remark}

\begin{algorithm}[htbp]
\caption{Tiling Algorithm
}
\label{alg: partition of a zonotope}
\textbf{Input}: An $n$-dimensional zonotope $Z=\left \langle \bm{c},\bm{G} \right \rangle, \bm{G} = (\bm{g}_i)_{1 \leq i\leq p} \in \mathbb{R}^{n, p}$, the right-most $n\times n$ submatrix of $\bm{G}$ is full rank, i.e., $\rank((\bm{g}_i)_{i\in \mathbb{N}_{[p-n+1,p]}}) = n$.\\
\textbf{Output}: Tiling matrix $\bm{T}$.
\begin{algorithmic}[1] 


\STATE Call Alg. \ref{alg: boundaries of a zonotope} to get boundary matrix $\bm{B}$
\STATE $\bm{T}:=[~]$
 
\FOR{$j = 1$ \TO $p-n$}
  
  \FOR{$i = 1$ \TO $\text{rows}(\bm{B})$}

    \IF{$B(i,j) = 0$}
    
        \STATE  
        $\bm{B} :=\bm{B}^{[i]}$
        
    \ENDIF
    
    \IF{$B(i,j) = -1$}
    
        \STATE $\bm{v}^\intercal$ := $B(i,\cdot)$
        \STATE $\bm{v}(j) := 0$ 
        \STATE $\bm{T}:=[\bm{T};\bm{v}^\intercal]$
        \STATE $B(i,j) := 1$
        
    \ENDIF
    
  \ENDFOR
  
\ENDFOR

\STATE $\bm{v}^\intercal = \bm{B}(-1,\cdot)$

\FOR{$j=p-n+1$ \TO $p$}

\STATE $\bm{v}(j) := 0$

\ENDFOR

\STATE $\bm{T}:=[\bm{T};\bm{v}^\intercal]$

\STATE \textbf{return} $\bm{T}$
\end{algorithmic} 
\end{algorithm}}

\begin{remark}
    The sub-zonotopes obtained by Alg. \ref{alg: partition of a zonotope} aren't necessarily parallelotopes. To make the results of tiling are exclusively paralletopes, one can recursive applying Alg. \ref{alg: partition of a zonotope} on each sub-zonotope in tiling matrix $\bm{T}$ until each sub-zonotope has $n$ generators. Additionally, Alg. \ref{alg: partition of a zonotope} allows terminating at any iteration, and the result of each iteration can serve as a tiling of the original zonotope. This flexibility is valuable for controlling the number of partitioned sub-zonotopes. Therefore, our proposed tiling algorithm is particularly well-suited for the inner-approximation computation scenario outlined in this paper, it enables a balance between the computational burden and precision of evaluating $O(h;\partial U_k)$ by constraining the number of sub-zonotopes in the tiling.
\end{remark}

\begin{theorem}[Soundness of tiling algorithm]
\label{thm: partition of a Zonotope}
Given an $n$-dimensional zonotope $Z=\left \langle \bm{c},\bm{G} \right \rangle$ with $p$ generators, the tiling matrix $\bm{T}$ obtained by Alg. \ref{alg: partition of a zonotope} satisfies the conditions in Def. \ref{def:partition matrix}.
\end{theorem}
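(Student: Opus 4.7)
My plan is to prove the theorem by induction on the outer-loop index $j = 0, 1, \ldots, p-n$, maintaining the following invariant: after iteration $j$, (i) every entry of the current boundary matrix $\bm{B}$ and of $\bm{T}$ lies in $\{-1, 0, 1\}$; (ii) each row of $\bm{T}$ appended so far defines a non-degenerate sub-zonotope of $Z$; (iii) these sub-zonotopes together with the residual zonotope $Z^{(j)} := \langle \bm{c} + \bm{g}_1 + \cdots + \bm{g}_j,\, (\bm{g}_{j+1}, \ldots, \bm{g}_p) \rangle$ tile $Z$, i.e., their union equals $Z$ and any two of them have disjoint interiors; (iv) $\bm{B}$, whose first $j$ columns are by that point constantly $+1$, coincides with the boundary matrix of $Z^{(j)}$ in the sense of Def.~\ref{def boundary matrix}, which exists by Theorem~\ref{thm: Boundary of a Zonotope} applied to $Z^{(j)}$. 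The base case $j=0$ reduces to $Z^{(0)}=Z$ and an empty $\bm{T}$, which holds by Line~1 of Alg.~\ref{alg: partition of a zonotope}.

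For the inductive step, the main geometric identity I need is the one-step tiling
\[
Z^{(j-1)} \;=\; Z^{(j)} \,\cup\, \bigcup_{F \in \mathcal{F}^{-}_j}\!\bigl(F + [-\bm{g}_j,\bm{g}_j]\bigr),
\]
where $\mathcal{F}^{-}_j$ collects the facets of $Z^{(j-1)}$ that lie on the $-\bm{g}_j$ side of the centre. This follows from viewing $Z^{(j-1)}$ as the Minkowski sum of the smaller zonotope $R^{(j)} := \langle \bm{0}, (\bm{g}_{j+1}, \ldots, \bm{g}_p)\rangle$ with the segment $[-\bm{g}_j, \bm{g}_j]$ (after translation by the running centre): the residual $Z^{(j)}$ is the copy of $R^{(j)}$ pushed to the $+\bm{g}_j$ end of the sweep, while the leftover material is exactly the lower facets of $R^{(j)}$ swept along $\bm{g}_j$. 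Disjointness of interiors in this union follows because the relative interiors of distinct facets of $R^{(j)}$ are pairwise disjoint, and $Z^{(j)}$ meets each extrusion only along a single shared facet.

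Translating this picture into matrix operations is exactly what the $j$-th inner loop of Alg.~\ref{alg: partition of a zonotope} performs. By Def.~\ref{def boundary matrix}, a facet of $Z^{(j-1)}$ is on the $-\bm{g}_j$ side iff the corresponding row satisfies $\bm{B}(i,j)=-1$, contains $\bm{g}_j$ among its own generators iff $\bm{B}(i,j)=0$, and is on the $+\bm{g}_j$ side iff $\bm{B}(i,j)=+1$. The rule ``$\bm{B}(i,j)=0 \Rightarrow$ delete row'' faithfully reflects that facets parallel to $\bm{g}_j$ become $(n-2)$-dimensional once $\bm{g}_j$ is dropped and therefore cease to be facets of $Z^{(j)}$. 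The rule ``$\bm{B}(i,j)=-1 \Rightarrow$ copy the row to $\bm{T}$ with $\bm{v}(j):=0$'' encodes the extruded piece $F+[-\bm{g}_j,\bm{g}_j]$, inserting $\bm{g}_j$ as a new generator of that sub-zonotope while keeping its centre at the centre of $F$; then setting $\bm{B}(i,j):=+1$ relocates the same facet to become the corresponding top facet of $Z^{(j)}$. Checking that the updated $\bm{B}$ agrees entry-for-entry with the output of Alg.~\ref{alg: boundaries of a zonotope} applied to $Z^{(j)}$ reduces to a bijection argument: the $(n-1)$-element submatrices of $(\bm{g}_{j+1}, \ldots, \bm{g}_p)$ spanning a hyperplane correspond to those $(n-1)$-subsets of $\{\bm{g}_1, \ldots, \bm{g}_p\}\setminus\{\bm{g}_j\}$ that, in the parent boundary matrix, already carried $\bm{B}(i,j)\neq 0$, and the cross product $\mathtt{CP}$ invoked in Alg.~\ref{alg: boundaries of a zonotope} is unchanged under removal of $\bm{g}_j$ from the generator list.

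The main obstacle I expect is precisely this last bookkeeping step: carefully matching the entrywise updates of $\bm{B}$ with the output that Alg.~\ref{alg: boundaries of a zonotope} would produce when run from scratch on $Z^{(j)}$, so that the invariant really chains from $j-1$ to $j$ rather than only giving a set-theoretic equality. Once this is in place, after the $p-n$ iterations the residual $Z^{(p-n)}$ is an $n$-dimensional parallelotope spanned by $\bm{g}_{p-n+1}, \ldots, \bm{g}_p$ (using the full-rank hypothesis on the right-most $n$ generators), and Lines~16--20 record exactly this parallelotope as a final row of $\bm{T}$ with zeros in the last $n$ columns. Combining this final row with the rows accumulated through the loop yields the two conditions of Def.~\ref{def:partition matrix}, completing the induction.
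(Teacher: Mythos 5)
Your plan shares the paper's overall skeleton---both proofs are an induction that peels off one generator per step, maintain that the surviving rows of $\bm{B}$ describe the boundary of the residual zonotope, and reduce everything to a one-step tiling claim---but you argue that one-step claim by a genuinely different route. Where you take the sweep decomposition of $Z^{(j-1)} = R^{(j)} + [-\bm{g}_j,\bm{g}_j]$ as the engine, the paper never argues coverage geometrically: it proves only the easy inclusion $Z^{(1)}\cup\bigcup_i P_i\subseteq Z$ and then forces equality by a \emph{volume} computation, using Shephard's formula $\operatorname{vol}(Z)=2^n\sum_{j_1<\cdots<j_n}\vert\operatorname{det}((\bm{g}_{j_1},\dots,\bm{g}_{j_n}))\vert$ and checking that the determinant terms of the residual (the $n$-subsets avoiding $\bm{g}_1$) and of the extruded tiles (the $n$-subsets containing $\bm{g}_1$) exactly exhaust those of $Z$. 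For disjointness, the paper separates each tile from the residual by the hyperplane through the shifted facet $F_i+2\bm{g}_1$, and separates tiles from one another via the identity $(F_i+s)\cap(F_j+s)=(F_i\cap F_j)+s$ together with a rank bound; note that Minkowski addition does not distribute over intersection for general sets (only ``$\subseteq$'' holds), so making that step airtight needs precisely the fibre-uniqueness argument you gesture at: a line in direction $\bm{g}_j$ exits the lower boundary of $R^{(j)}$ at a unique point, and the interior of a prism is $\mathrm{relint}(F)+\mathrm{relint}(s)$. The trade-off is clear: the paper's volume argument buys coverage with no point-set topology at the price of Shephard's formula, while your sweep picture is more elementary and yields coverage and disjointness in one frame---but it is currently only asserted, and you should supply the short ``last exit point'' proof (for $x\in Z^{(j-1)}\setminus Z^{(j)}$, the maximal $t$ with $x-t\bm{g}_j$ in the lower copy lands on a strictly lower facet, so $x$ lies in that facet's extrusion).

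Two concrete repairs are needed. First, a translation slip: with $\mathcal{F}^{-}_j$ defined as facets of $Z^{(j-1)}$, the extrusion must be $F+[\bm{0},2\bm{g}_j]$, equivalently $F+\left\langle \bm{g}_j,\bm{g}_j\right\rangle$, which is what the rule $-1\mapsto 0$ actually encodes: the tile's centre is $\bm{c}(F)+\bm{g}_j$, \emph{not} the centre of $F$; your $F+[-\bm{g}_j,\bm{g}_j]$ protrudes outside $Z^{(j-1)}$ and is consistent only if $F$ is read as the corresponding facet of the mid-sweep copy $R^{(j)}$. Second, you are right to flag the boundary-matrix bookkeeping as the crux, and it is more delicate than your bijection suggests (the paper dispatches it in a single sentence): if a row deleted at step $j$ (because its facet contains $\bm{g}_j$ as a generator) corresponds to a facet whose remaining generators among $\bm{g}_{j+1},\dots,\bm{g}_p$ still have rank $n-1$---which can happen with parallel or repeated generators---then $Z^{(j)}$ retains a facet with that normal that is now missing from $\bm{B}$, and such a missing row can fail to spawn a tile at a later iteration. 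So the claimed entry-for-entry correspondence with Alg. \ref{alg: boundaries of a zonotope} run on $Z^{(j)}$ requires a genericity hypothesis or a separate argument covering this degeneracy; neither your sketch nor the paper's proof addresses it, so closing it would put you ahead of, not merely level with, the published argument.
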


\subsubsection{The complexity of tiling algorithm}
{ For an $n$-dimensional zonotope  $Z=\left \langle \bm{c},\bm{G} \right \rangle, \bm{G} = (\bm{g}_i)_{1 \leq i\leq p} \in \mathbb{R}^{n,p}$, where $\rank((\bm{g}_i)_{i\in \mathbb{N}_{[p-n+1,p]}}) = n$, 
assume $Z$ has $M$ facets. The calling of Alg. \ref{alg: boundaries of a zonotope} is $O\left(Mn(\DET(n-1)+p)+n(n-1)^2\binom{p}{n-1}\right)$. The size of boundary matrix $\bm{B}$ is $M\times p$, the two-layer ``for'' Loop (Line 3-15) has iterations less than $M(p-n)$, thus the calling of Alg. \ref{alg: boundaries of a zonotope} is dominant in the complexity of tiling algorithm. Consequently, the complexity of Alg. \ref{alg: partition of a zonotope} is $O\left(Mn(\DET(n-1)+p)+n(n-1)^2\binom{p}{n-1}\right)$. }

\subsubsection{Complexity comparison}
Here we  compare the complexity of  tiling algorithm proposed in \cite{kabi2020synthesizing} with ours. For an $n$-dimensional zonotope $Z=\left \langle \bm{c},\bm{G} \right \rangle, \bm{G} \in \mathbb{R}^{n,p}$, assume $Z$ has $M$ facets and $N$ vertexes. 
The main computation procure of algorithm in \cite{kabi2020synthesizing} is computing $\Sigma$ (a set of sign vectors of cells of the arrangement), which is equivalent to enumerate the sign vectors of all the vertexes of $Z$. The computation of $\Sigma$, utilizing a reverse search algorithm \cite{ferrez2005solving}, owns the complexity of $O(np \LP(p,n) \vert\Sigma\vert) = O(Nnp \LP(p,n))$ (the number of sign vectors in $\Sigma$ is equal to $N$), where $\LP(p,n)$ is the time to solve a linear programming (LP) with $p$ inequalities in $n$ variables. There are various algorithms for solving LPs including simplex algorithm, interior point method and their variants. The state-of-the-art algorithms for solving LPs take complexity around $O(n^{2.37})$ \cite{cohen2021solving}. As for the complexity of our algorithm, we have clarified that dominant part in complexity is the procure extracting the boundary of a zonotope, which has the complexity $O\left(Mn(\DET(n-1)+p)+n(n-1)^2\binom{p}{n-1}\right)$. Additionally for zonotope $Z$, the number of its vertexes $N$ is usually much larger than the one of its facets $M$, particularly in high dimension (for example, a hypercube in $\mathbb{R}^n$ has $2^n$ vertexes and $2n$ facets). According to the analysis above, we can conclude the complexity of our tiling algorithm is better than the one of algorithm ($O(Nnp \LP(p,n))$) in \cite{kabi2020synthesizing}.

\begin{figure}[htbp]
    \centering
    \includegraphics[width = \linewidth]{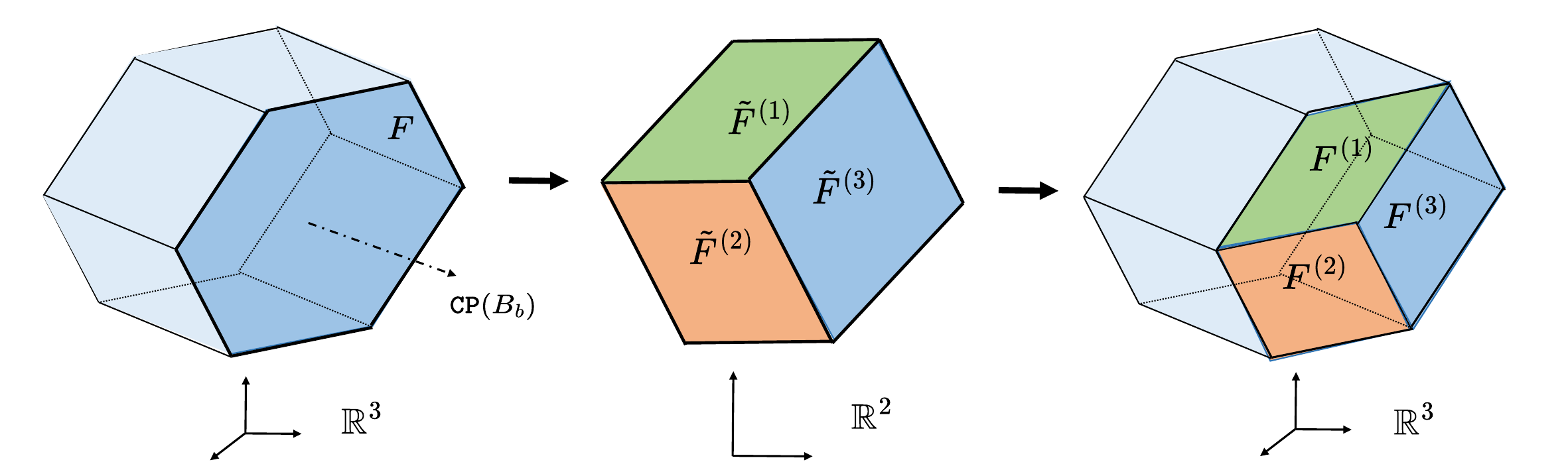}
    \caption{Illustration of boundary refinement}
    \label{fig:refine illustration}
\end{figure}
\subsubsection{Boundary refinement via tiling algorithm} Given an $n$-dimensional zonotope  $Z=\left \langle \bm{c},\bm{G} \right \rangle, \bm{G}\in \mathbb{R}^{n, p}$, for one of its facet $F = \left \langle \bm{c}_b,\bm{G}_b \right \rangle$, we transform it into the space $\mathbb{R}^{n-1}$ with transformation matrix $\bm{B}_b^\intercal$, where $\bm{B}_b$ is the $n\times (n-1)$ submatrix of $\bm{G}_b$ with rank $n-1$,
then the $(n-1)$-dimensional transformed zonotope can be denoted as $\tilde{F}=\left \langle \bm{B}_b^\intercal \bm{c}_b,\bm{B}_b^\intercal \bm{G}_b \right \rangle$. Using tiling algorithm, $\tilde{F}$ can be split into some smaller sub-zonotopes $\{\tilde{F}^{(1)}, \tilde{F}^{(2)}, \cdots, \tilde{F}^{(M)}\}$. For each of $(n-1)$-dimensional sub-zonotopes such as $\tilde{F}^{(1)} = \left \langle \tilde{\bm{c}}_{(1)},\tilde{\bm{G}}_{(1)} \right \rangle$, an inverse transformation recovers it to the zonotope in the space $\mathbb{R}^n$, i.e., $F^{(1)}=\left \langle \bm{c}_{(1)},\bm{G}_{(1)} \right \rangle$, where $\bm{c}_{(1)} = [\bm{B}_b^\intercal; \mathtt{CP}(\bm{B}_b)^\intercal]^{-1}[\tilde{\bm{c}}_{(1)};\mathtt{CP}(\bm{B}_b) \cdot \bm{c}_b]$, $\bm{G}_{(1)} = [\bm{B}_b^\intercal;\mathtt{CP}(\bm{B}_b)^\intercal]^{-1}[\tilde{\bm{G}}_{(1)};\bm{0}^\intercal]$. The main steps of boundary refinement are visualized in Fig. \ref{fig:refine illustration}.

\subsection{Contracting Computed Outer-approximation}
\label{contraction}
In this subsection we present our contraction method, yielding the inner approximation candidate $U_{k+1}^\prime$ by contracting $O(h; U_k)$.
In contrast to the approaches in \cite{xue2016under}, which contracts $O(h; U_k)$ by reducing size proportionally, our contraction method offers a more flexible way. Specifically, the length of each generator of $O(h; U_k)$ can be adjusted and some generators can be removed. The incorporation of this adaptive contraction method enhances the tightness of the computed inner-approximation.

By extracting and refining of boundary $\partial U_{k}$ of $U_{k}$, we get a collection of sub-zonotopes, i.e., $\{\partial U_k^{(i)}\}_{i \in \mathbb{N}_{[1,s]}}$, where $\bigcup_{i=1}^s \partial U_k^{(i)} = \partial U_k$. Then $O(h;\partial U_k)$ can be obtained by uniting all the out-approximations $\partial O_{k+1}^{(i)} := O(h;\partial U_k^{(i)}), i \in \mathbb{N}_{[1,s]}$, i.e., $O(h;\partial U_k) = \bigcup_{i=1}^s \partial O_{k+1}^{(i)}$.

Noticing that the shape of every outer-approximation $\partial O_{k+1}^{(i)}=\left \langle \bm{c}_o,\bm{G}_o \right \rangle$ is usually long and narrow (refer to Fig. \ref{fig:framework}), we choose the top $n-1$ independent generators by norm (such as Euclidean norm) to span a hyperplane, which can be seen as an $(n-1)$-dimensional form approximating $\partial O_{k+1}^{(i)}$. Then we compute the cross product $\mathtt{CP}(\cdot)$ of this hyperplane as its normal vector  to represent the attitude of $\partial O_{k+1}^{(i)}$, denoted by $\mathtt{AT}\left(\partial O_{k+1}^{(i)}\right)$ (i.e. $\mathtt{CP}(\hat{\bm{G}}_o)$, where $\hat{\bm{G}}_o$ contains top $n-1$ indenpent generators of $\bm{G}_o$ by norm).

Initially, we set the inner-approximation candidate $U_{k+1}^\prime := O(h;U_k)$. Subsequently, we iteratively reduce the length of generators and adjust the position (by changing the center) of $U_{k+1}^\prime$ until the intersections between $U_{k+1}^\prime$ and all outer-approximations $\partial O_{k+1}^{(i)}$ become empty sets. 
For each outer-approximation $\partial O_{k+1}^{(i)}$, we begin by shortening the length of generators that are most likely to yield collisions  between $U_{k+1}^\prime$ and $\partial O_{k+1}^{(i)}$, which would prevent the unnecessary contraction of $U_{k+1}^\prime$ and make the result tighter. Heuristically, the generators with directions closest to $\mathtt{AT}\left(\partial O_{k+1}^{(i)}\right)$, or in other words, those most likely to ``perpendicular'' to $\partial O_{k+1}^{(i)}$ (precisely, perpendicular to hyperplane spanned by column vectors of $\hat{\bm{G}}_o$) should be given priority considerations. When encountering a generator that dose not need to be shortened, indicating that $U_{k+1}^\prime$ and $\partial O_{k+1}^{(i)}$ have no overlapping parts, we turn to the next outer-approximation $\partial O_{k+1}^{(i+1)}$.
The details of contraction method proposed is summarized below.
\begin{enumerate}

    \item Initialize inner-approximation candidate $U_{k+1}^\prime :=\left \langle \bm{c}_u,\bm{G}_u \right \rangle= O(h;U_k)$.
    \item For every boundary outer-approximations $\partial O_{k+1}^{(i)}=\left \langle \bm{c}_o,\bm{G}_o \right \rangle, i \in \mathbb{N}_{[1,s]}$, carry out the following processing steps.
    \begin{itemize}
        \item[2a.] Sort the generators $\{\bm{g}_l\}_{1\leq l\leq \cols(\bm{G}_u)}$ of $U_{k+1}^\prime$ according the angle with $\mathtt{AT}\left(\partial O_{k+1}^{(i)}\right)$ from small to large (i.e., $\Vert {\rm cos} \theta \Vert = \frac{\Vert\bm{g}_l\cdot \mathtt{AT}\left(\partial O_{k+1}^{(i)}\right)\Vert}{\Vert \bm{g}_l \Vert \Vert \mathtt{AT}\left(\partial O_{k+1}^{(i)}\right) \Vert}$ from large to small).
        \item[2b.] Loop all the generators according to the sorted order, for the generator $\bm{g}_l$, compute its domain $[\underline{\alpha_l}, \overline{\alpha_l}]$ which intersects with $\partial O_{k+1}^{(i)}$ by LPs \eqref{min} and \eqref{max} (using approach in  \cite[Chapter $4.2.5$]{jaulin2001interval}), where $\bm{\alpha} = (\alpha_1, \cdots, \alpha_l, \cdots, \alpha_{\cols(\bm{G}_u)})^\intercal$, $\bm{\beta} = (\beta_1, \cdots,  \beta_{\cols(\bm{G}_o)})^\intercal$. 
        
        \begin{equation}
        \label{min}
            \begin{aligned}
                 \min\quad & \alpha_l  \\
                 s.t.\quad &\bm{c}_u + \bm{G}_u \bm{\alpha} = \bm{c}_o + \bm{G}_o \bm{\beta}\\
                &-\bm{1} \leq \bm{\alpha}\leq \bm{1}, -\bm{1} \leq \bm{\beta} \leq \bm{1}
            \end{aligned}
        \end{equation}
        \begin{equation}
        \label{max}
            \begin{aligned}
                 \max\quad & \alpha_l  \\
                 s.t.\quad &\bm{c}_u + \bm{G}_u \bm{\alpha} = \bm{c}_o + \bm{G}_o \bm{\beta}\\
                &-\bm{1} \leq \bm{\alpha}\leq \bm{1}, -\bm{1} \leq \bm{\beta} \leq \bm{1}
            \end{aligned}
        \end{equation}    
        When the optimal value of \eqref{min} or \eqref{max} can't be found, then terminate this loop and continue for the next  boundary outer-approximation $\partial O_{k+1}^{(i+1)}$.
        
        \item[2c.] If $[\underline{\alpha_l}, \overline{\alpha_l}]=[-1,1]$, then delete $\bm{g}_l$ from generator matrix $\bm{G}_u$. Else, update the range of $a_l \in \max\{[-1,\underline{a_l} - \epsilon],[\overline{a_l} +\epsilon,1]\} \triangleq [\underline{\gamma}, \overline{\gamma}]$, where the operation $\max\{\cdot,\cdot\}$ means choosing the interval with maximum length and $\epsilon$ is a user-defined small positive number. 
        \item[2d.] Update $\bm{c}_u := \bm{c}_u + 0.5 (\overline{\gamma}+\underline{\gamma}) \bm{g}_l$ and $\bm{g}_l := 0.5 (\overline{\gamma}-\underline{\gamma}) \bm{g}_l$.
    \end{itemize}
\end{enumerate}
\begin{remark}
    The introducing of the user-defined small positive number $\epsilon$ is to ensure  $U_{k+1}^\prime \cap \partial O_{k+1}^{(i)} = \emptyset$.
\end{remark} 
\begin{remark}
    In practice, it is a common case that $[\underline{\alpha_l}, \overline{\alpha_l}]=[-1,1]$, thus the number of generators of inner-approximation candidate $U_{k+1}^\prime$ is usually less than $O(h;U_k)$'s, which shows that this contraction method has the advantage for zonotope order reduction \cite{yang2018comparison}.
\end{remark} 
Appendix \ref{app:extra examples} provides an example (Example \ref{ex2}) to illustrate the procedure of the contraction method and why is necessary to sort the generators $\{\bm{g}_l\}_{1\leq l\leq \cols(\bm{G}_u)}$ of $U_{k+1}^\prime$ according to the angle with $\mathtt{AT}\left(\partial O_{k+1}^{(i)}\right)$.

\subsubsection{Verification of inner-approximation candidate}
According to Theorem 1 and 3 in \cite{kochdumper2020computing}, after obtaining inner-approximation candidate $U_{k+1}^\prime$, it's crucial to check whether the outer-approximation $O(h;\bm{c})$ ($\bm{c}$ is the center of $U'_{k+1}$) of the time-inverted system $\dot{\bm{x}} = -\bm{f}(\bm{x})$ is within $U_k$, which confirms the correctness of computed inner-approximation $U_{k+1}$. Since both $U_{k+1}^\prime$ and $O(h;\bm{c})$ are zonotopes, this verification reduces a zonotope containment problem. In our approach, we leverage a sufficient condition outlined in \cite{sadraddini2019linear}, which can be encoded into LP to perform the inclusion verification.

\section{Experiments}
\label{experiment}
In this section we demonstrate the performance of our approach on various benchmarks. Our implementation utilizes the floating point linear programming solver GLPK and C++ library Eigen. We adopt the approach outlined in  \cite{althoff2008reachability} to compute outer-approximations appeared in our method. All  experiments herein are run on Ubuntu 20.04.3 LTS in virtual machine 
with CPU 12th Gen Intel Core i9-12900K × 8  and RAM 15.6 GB. 

To evaluate the precision of the computed inner-approximations, we use the minimum width ration $\gamma_{min}$ similar to \cite{kochdumper2020computing}, which is defined as
\begin{equation}
\label{gamma}
    \begin{aligned}
        &\gamma_{\min }=\min _{\bm{v} \in \mathcal{V}} \frac{\left|\gamma_i(\bm{v})\right|}{\left|\gamma_o(\bm{v})\right|} \\
\text{with}~~&\gamma_i(\bm{v})  =\max _{x \in U_k} \bm{v}^{\intercal} x+\max _{x \in U_k}-\bm{v}^{\intercal} x \\ 
&\gamma_o(\bm{v})  =\max _{x \in O_k} \bm{v}^{\intercal} x+\max _{x \in O_k}-\bm{v}^{\intercal} x
    \end{aligned}
\end{equation}
where $U_k$ and $O_k$ are the  inner-approximation and outer-approximation of the reachable set at $k$ step respectively. $\bm{v} \in \mathcal{V} \subset \mathbb{R}^n$, and $\mathcal{V}$ is the set consisting of $n$ axis-aligned unit-vectors. To ensure a fair comparison, the $O_k$ is chosen to be the interval enclosure of $1000$ random points at the final time instant simulated via ode45 in MATLAB. Intuitively, the larger this ratio, the better the approximation. 

Our approach is systematically compared with the state-of-the-art method presented in \cite{kochdumper2020computing}, which is publicly available in the reachability analysis toolbox CORA \cite{althoff2015introduction}. Benchmarks with system's dimension from $2$ to $12$ are utilized to show the the comprehensive advantages of our approach. Their configurations including dimensions, initial sets and references are listed in Table \ref{benchmark}. 




\begin{table}[htbp]
       \caption{Benchmarks and their dimensions, initial sets and references}
    \centering
    \setlength{\tabcolsep}{4mm}{
    \begin{tabular}{*{7}{c}}
  \toprule
   \textbf{Dim}&\textbf{Benchmark} &\textbf{Initial Set}& \textbf{Reference}\\
  \midrule
    2     & ElectroOsc & $\bm{c}_1 + [-0.1,0.1]^2$ & Example 3 in \cite{xue2016under}\\
    3     & Rossler &   $\bm{c}_2 + [-0.15,0.15]^3$ & Example 3.4.3 in \cite{chen2015reachability}    \\
    4     & Lotka-Volterra &  $\bm{c}_3 +[-0.2, 0.2]^4$ & Example 5.2.3 in \cite{chen2015reachability}  \\
    6     & Tank6 & $\bm{c}_4 + [-0.2,0.2]^6$&\cite{althoff2008reachability}\\
    7     & BiologicalSystemI &$\bm{c}_5+[-0.01, 0.01]^7$&Example 5.2.4 in \cite{chen2015reachability}   \\
    9     & BiologicalSystemII & $\bm{c}_6 + [-0.01, 0.01]^9$&Example 5.2.4 in \cite{chen2015reachability} \\
    12    & Tank12 & $\bm{c}_7 + [-0.2,0.2]^{12}$ & \cite{althoff2008reachability}  \\  

  \bottomrule
\end{tabular}}
    \label{benchmark}
\begin{footnotesize}
\begin{tablenotes}
\item Note: for the parameters of Tank6 and Tank12, all $A_i$ are set to $A_i=1$, and all $k_i$ are set to $k_i=0.015$, $\kappa=0.01, v =0, g= 9.81$, for Tank6 $n=6$ and for Tank12 $n=12$;
for the centers of initial sets, $\bm{c}_1 = (0, 3)^{\top}$, $\bm{c}_2 = (0.05,-8.35,0.05)^{\top} $, $\bm{c}_3 = (0.6,0.6,0.6,0.6)^{\top}$, $\bm{c}_4 = (2, 4, 4, 2, 10, 4)^{\top}$, $\bm{c}_5 = (0.1, 0.1,0.1, 0.1, 0.1, 0.1,0.1)^{\top}$, $\bm{c}_6 = (1, 1,1, 1, 1, 1,1,1,1)^{\top}$, $\bm{c}_7 = (2, 4, 4, 2, 10, 4,2,2,2,2,2,2)^{\top}$.
\end{tablenotes}
\end{footnotesize}
\end{table}


\subsection{Advantage in efficiency and precision}
 For each benchmark stated in Table \ref{benchmark}, we compute the inner-approximations at the time instant $T$ using our approach and the one in CORA. Table \ref{ordinary comparison} demonstrates the time cost and $\gamma_{min}$ for tow methods. The advantages of our approach are evident from low dimensional scenario ($2$-dimensional) to high dimensional scenario ($12$-dimensional), showcasing improved efficiency and precision, particularly in higher dimensions. Taking the benchmark Tank12 as an instance, our approach achieves nearly  $38\%$ improvement in precision while requiring only $12\%$ of the time compared to CORA. The visualization of the inner-approximations computed by our approach and CORA is illustrated in Fig. \ref{ordinary_figure} provided in Appendix \ref{app: table and figure}, together with the outer-approximations computed by CORA in this figure for sake of convenient comparison.
\begin{table}[htbp]
     \caption{Comparison between our approach and CORA for each benchmark}
    \centering
    \setlength{\tabcolsep}{3.5mm}
    {\begin{tabular}{*{7}{c}}
  \toprule
   \multirow{2}*{\textbf{Dim}}&\multirow{2}*{\textbf{Benchmark}} &\multirow{2}*{\textbf{T}}& \multicolumn{2}{c}{\textbf{Our Approach}} & \multicolumn{2}{c}{\textbf{ CORA}}  \\
  \cmidrule(lr){4-5}\cmidrule(lr){6-7}
  &&&time (s) & $\gamma_{min}$  & time (s) & $\gamma_{min}$\\
  \midrule
    2     & ElectroOsc & 2.5   & $\bm{23.56}$  & $\bm{0.88}$ & 36.50 & 0.57 \\
    3     & Rossler & 1.5   & $\bm{27.72}$ & 0.76 & 36.63  & $\bm{0.78}$ \\
    4     & Lotka-Volterra & 1     & $\bm{10.43}$ & $\bm{0.65}$ & 335.06 & 0.34 \\
    6     & Tank6 & 80    & $\bm{50.83}$  & $\bm{0.82}$ &201.05  & 0.63 \\
    7     & BiologicalSystemI & 0.2   & $\bm{1.74}$  & $\bm{0.96}$ & 125.73  & 0.90 \\
    9     & BiologicalSystemII & 0.2   & $\bm{72.47}$  & $\bm{0.95}$ & 188.25  & 0.88 \\
    12    & Tank12 & 60    & $\bm{235.88}$  & $\bm{0.77}$ & 1834.65 & 0.56 \\

  \bottomrule
\end{tabular}}
    \label{ordinary comparison}    
\end{table}

\subsection{Advantage in long time horizons}
Further, we extend the time horizon in Table \ref{ordinary comparison} and compare the performance of inner-approximation computation between our approach and CORA. As evidenced by the results in Table \ref{long time comparison}, our approach demonstrates the reliable capability to compute inner-approximations in relatively longer time horizons compared to CORA. It shows that our approach can consistently compute all inner-approximations while maintaining benign efficiency and precision. In contrast, the approach in CORA fails to compute inner-approximations for all benchmarks. The visualization of the inner-approximations computed by our approach and CORA is illustrated in Fig. \ref{longtime_figure} provided in Appendix \ref{app: table and figure}.
\begin{table}[htbp]
       \caption{Comparison between our approach and CORA for each benchmark in relatively longer time horizons}
    \centering
    \setlength{\tabcolsep}{3.2mm}{
    \begin{tabular}{*{7}{c}}
  \toprule
   \multirow{2}*{\textbf{Dim}}&\multirow{2}*{\textbf{Benchmark}} &\multirow{2}*{\textbf{T}}& \multicolumn{2}{c}{\textbf{Our Approach}} & \multicolumn{2}{c}{\textbf{ CORA}}  \\
  \cmidrule(lr){4-5}\cmidrule(lr){6-7}
  &&&time (s) & $\gamma_{min}$  & time (s) & $\gamma_{min}$\\
  \midrule
    2     & ElectroOsc & 3     & 73.76 & 0.90 & \multirow{7}[0]{*}{$\bm{-}$} & \multirow{7}[0]{*}{$\bm{-}$} \\
    3     & Rossler & 2.5   & 63.42 & 0.60 &       &  \\
    4     & Lotka-Volterra & 1.5   & 81.81 & 0.62 &       &  \\
    6     & Tank6 & 120   & 129.58 & 0.65 &       &  \\
    7     & BiologicalSystemI & 1.3   & 462.87 & 0.41 &       &  \\
    9     & BiologicalSystemII & 0.375 & 261.78 & 0.66 &       &  \\
    12    & Tank12 & 100    & 377.85 & 0.49 &       &  \\

  \bottomrule
\end{tabular}}
    \label{long time comparison}
    \begin{footnotesize}
\begin{tablenotes}
\item Note: the symbol ``$\bm{-}$'' means that in this experimental configuration CORA cannot compute inner-approximations.
\end{tablenotes}
\end{footnotesize}
\end{table}

\subsection{Advantage in big initial sets}
We also expand the initial sets as listed in Table \ref{benchmark} to highlight our advantage in computing inner-approximations from larger initial sets. For each benchmark, we set both a short and a long time instant  to compute inner-approximations using our approach and CORA. As shown in Table \ref{big init compairson}, our approach can accomplish all the inner-approximation computations while maintaining high levels of efficiency and precision. In contrast, for the short time instant scenario, the performance of CORA is worse than ours in both computation time and accuracy, and CORA fails to compute inner-approximations at long time instant for all benchmarks.
The visualization of the inner-approximations computed by our approach and CORA is illustrated in Fig. \ref{biginit_short_time_figure} and Fig. \ref{biginit_long_time_figure} provided in Appendix \ref{app: table and figure}.
\begin{table}[htbp]
       \caption{Comparison between our approach and CORA for each benchmark in big initial sets.}
    \centering
    \setlength{\tabcolsep}{1.7mm}{
    \begin{tabular}{*{8}{c}}
  \toprule
   \multirow{2}*{\textbf{Dim}}&\multirow{2}*{\textbf{Benchmark}} &\multirow{2}*{\textbf{Initial Set}}&\multirow{2}*{\textbf{T}}& \multicolumn{2}{c}{\textbf{Our Approach}} & \multicolumn{2}{c}{\textbf{ CORA}}  \\
  \cmidrule(lr){5-6}\cmidrule(lr){7-8}
  &&&&time (s) & $\gamma_{min}$  & time (s) & $\gamma_{min}$\\
  \midrule
  
 \multirow{2}[0]{*}{2} & \multirow{2}[0]{*}{ElectroOsc} & \multicolumn{1}{c}{\multirow{2}[0]{*}{ $\bm{c}_1 + 0.5\bm{I}^2$}} & 1.5    & $\bm{4.79}$ & $\bm{0.92}$ & \multicolumn{1}{r}{24.32} & \multicolumn{1}{r}{0.43} \\
      &       &       & 2    & 11.29 & 0.84 & $\bm{-}$    & $\bm{-}$ \\

 \multirow{2}[0]{*}{3} & \multirow{2}[0]{*}{Rossler} & \multicolumn{1}{c}{\multirow{2}[0]{*}{ $\bm{c}_2 +  0.5\bm{I}^3$}} & 1   & $\bm{15.88}$ & $\bm{0.59}$ & \multicolumn{1}{r}{36.54} & \multicolumn{1}{r}{0.53} \\
      &       &       & 1.5    & 24.01 & 0.58 & $\bm{-}$    & $\bm{-}$ \\          
 \multirow{2}[0]{*}{4} & \multirow{2}[0]{*}{Lotka-Volterra} & \multicolumn{1}{c}{\multirow{2}[0]{*}{ $\bm{c}_3 +  0.5\bm{I}^4$}} & 0.4    & $\bm{15.45}$ & $\bm{0.71}$ & \multicolumn{1}{r}{153.57} & \multicolumn{1}{r}{0.38} \\
      &       &       & 1    & 64.17 & 0.52 & $\bm{-}$    & $\bm{-}$ \\

 \multirow{2}[0]{*}{6} & \multirow{2}[0]{*}{Tank6} & \multicolumn{1}{c}{\multirow{2}[0]{*}{ $\bm{c}_4 + 0.5\bm{I}^6$}} & 80    & $\bm{66.83}$ & $\bm{0.60}$ & \multicolumn{1}{r}{463.28} & \multicolumn{1}{r}{0.13} \\
      &       &       & 100    & 80.91 & 0.53 & $\bm{-}$    & $\bm{-}$ \\

\multirow{2}[0]{*}{7} & \multirow{2}[0]{*}{BiologicalSystemI} & \multicolumn{1}{c}{\multirow{2}[0]{*}{ $\bm{c}_5 + 0.05\bm{I}^7$}} & 0.5    & $\bm{118.65}$ & $\bm{0.62}$ & \multicolumn{1}{r}{615.89} & \multicolumn{1}{r}{0.38} \\
      &       &       & 0.7    & 281.05 & 0.41 & $\bm{-}$    & $\bm{-}$ \\
      
    \multirow{2}[0]{*}{9} & \multicolumn{1}{r}{\multirow{2}[0]{*}{BiologicalSystemII}} & \multicolumn{1}{c}{\multirow{2}[0]{*}{  $\bm{c}_6 + 0.05\bm{I}^9$}} & 0.26  & $\bm{142.17}$ & $\bm{0.65}$ &1494.38 & \multicolumn{1}{r}{0.32} \\
          &       &       & 0.28  & 142.02 & 0.55 & $\bm{-}$    & $\bm{-}$ \\
          
    \multirow{2}[0]{*}{12} & \multirow{2}[0]{*}{Tank12} & \multicolumn{1}{c}{\multirow{2}[0]{*}{ $\bm{c}_7+0.5\bm{I}^{12}$}} & 40    & $\bm{162.46}$ & $\bm{0.73}$ & \multicolumn{1}{r}{1693.68} & \multicolumn{1}{r}{0.41} \\
          &       &       & 60    & 235.57 & 0.54 & $\bm{-}$    & $\bm{-}$ \\

  \bottomrule
\end{tabular}}

    \label{big init compairson}
        \begin{footnotesize}
\begin{tablenotes}
\item Note: the symbol ``$\bm{-}$'' means that in this experimental configuration CORA cannot compute inner-approximations. $\bm{I}^d$ denotes the box $[-1,1]^d$.
\end{tablenotes}
\end{footnotesize}
\end{table}

\section{Conclusion}
\label{conclusion}
In this paper we propose a novel approach to compute inner-approximations of reachable sets for nonlinear systems based on zonotopic boundary analysis. To enhance the efficiency and precision of the computed inner-approximations, we introduce three innovative and efficient methods, including the algorithm of extracting boundaries of zonotopes, the algorithm of tiling zonotopes for boundary refinement, and contraction strategy for obtaining inner-approximations from pre-computed outer-approximations. In comparison to the state-of-the-art methods for inner-approximation computation, our approach demonstrates superior performance in terms of efficiency and precision,  particularly within high dimensional cases. Moreover,  our proposed approach exhibits a remarkable capability to compute inner-approximations for scenarios with long time horizons and large initial sets, where the inner-approximations are usually failed to be computed by existing methods.


\section*{Acknowledgement}
This work is funded by the CAS Pioneer Hundred Talents Program, Basic Research Program of  Institute of Software, CAS (Grant No. ISCAS-JCMS-202302) and National Key R\&D Program of China (Grant No. 2022YFA1005101).

\bibliography{main}

\begin{thebibliography}{10}
\providecommand{\url}[1]{\texttt{#1}}
\providecommand{\urlprefix}{URL }
\providecommand{\doi}[1]{https://doi.org/#1}

\bibitem{althoff2015introduction}
Althoff, M.: An introduction to cora 2015. In: Proc. of the workshop on applied verification for continuous and hybrid systems. pp. 120--151 (2015)

\bibitem{althoff2021set}
Althoff, M., Frehse, G., Girard, A.: Set propagation techniques for reachability analysis. Annual Review of Control, Robotics, and Autonomous Systems  \textbf{4},  369--395 (2021)

\bibitem{althoff2008reachability}
Althoff, M., Stursberg, O., Buss, M.: Reachability analysis of nonlinear systems with uncertain parameters using conservative linearization. In: 2008 47th IEEE Conference on Decision and Control. pp. 4042--4048. IEEE (2008)

\bibitem{althoff2010computing}
Althoff, M., Stursberg, O., Buss, M.: Computing reachable sets of hybrid systems using a combination of zonotopes and polytopes. Nonlinear analysis: hybrid systems  \textbf{4}(2),  233--249 (2010)

\bibitem{bajaj1996splitting}
Bajaj, C.L., Pascucci, V.: Splitting a complex of convex polytopes in any dimension. In: Proceedings of the twelfth annual symposium on Computational geometry. pp. 88--97 (1996)

\bibitem{bjorner1999oriented}
Bj{\"o}rner, A.: Oriented matroids. No.~46, Cambridge University Press (1999)

\bibitem{branicky1998multiple}
Branicky, M.S.: Multiple lyapunov functions and other analysis tools for switched and hybrid systems. IEEE Transactions on automatic control  \textbf{43}(4),  475--482 (1998)

\bibitem{chen2015reachability}
Chen, X.: Reachability analysis of non-linear hybrid systems using taylor models. Ph.D. thesis, Fachgruppe Informatik, RWTH Aachen University (2015)

\bibitem{cohen2021solving}
Cohen, M.B., Lee, Y.T., Song, Z.: Solving linear programs in the current matrix multiplication time. Journal of the ACM (JACM)  \textbf{68}(1),  1--39 (2021)

\bibitem{ferrez2005solving}
Ferrez, J.A., Fukuda, K., Liebling, T.M.: Solving the fixed rank convex quadratic maximization in binary variables by a parallel zonotope construction algorithm. European Journal of Operational Research  \textbf{166}(1),  35--50 (2005)

\bibitem{fisikopoulos2016faster}
Fisikopoulos, V., Penaranda, L.: Faster geometric algorithms via dynamic determinant computation. Computational Geometry  \textbf{54},  1--16 (2016)

\bibitem{girard2005reachability}
Girard, A.: Reachability of uncertain linear systems using zonotopes. In: HSCC. vol.~3414, pp. 291--305. Springer (2005)

\bibitem{goubault2017forward}
Goubault, E., Putot, S.: Forward inner-approximated reachability of non-linear continuous systems. In: Proceedings of the 20th international conference on hybrid systems: computation and control. pp. 1--10 (2017)

\bibitem{goubault2019inner}
Goubault, E., Putot, S.: Inner and outer reachability for the verification of control systems. In: Proceedings of the 22nd ACM International Conference on Hybrid Systems: Computation and Control. pp. 11--22 (2019)

\bibitem{henzinger1995s}
Henzinger, T.A., Kopke, P.W., Puri, A., Varaiya, P.: What's decidable about hybrid automata? In: Proceedings of the twenty-seventh annual ACM symposium on Theory of computing. pp. 373--382 (1995)

\bibitem{herrmann2008splitting}
Herrmann, S., Joswig, M.: Splitting polytopes. arXiv preprint arXiv:0805.0774  (2008)

\bibitem{jaulin2001interval}
Jaulin, L., Kieffer, M., Didrit, O., Walter, E.: Interval analysis. In: Applied interval analysis, pp. 11--43. Springer (2001)

\bibitem{kabi2020synthesizing}
Kabi, B.: Synthesizing invariants: a constraint programming approach based on zonotopic abstraction. Ph.D. thesis, Institut polytechnique de Paris (2020)

\bibitem{kochdumper2020computing}
Kochdumper, N., Althoff, M.: Computing non-convex inner-approximations of reachable sets for nonlinear continuous systems. In: 2020 59th IEEE Conference on Decision and Control (CDC). pp. 2130--2137. IEEE (2020)

\bibitem{korda2013inner}
Korda, M., Henrion, D., Jones, C.N.: Inner approximations of the region of attraction for polynomial dynamical systems. IFAC Proceedings Volumes  \textbf{46}(23),  534--539 (2013)

\bibitem{lakshmikantham1990practical}
Lakshmikantham, V., Leela, S., Martynyuk, A.A.: Practical stability of nonlinear systems. World Scientific (1990)

\bibitem{li2018simplecar}
Li, J., Dureja, R., Pu, G., Rozier, K.Y., Vardi, M.Y.: Simplecar: An efficient bug-finding tool based on approximate reachability. In: Computer Aided Verification: 30th International Conference, CAV 2018, Held as Part of the Federated Logic Conference, FloC 2018, Oxford, UK, July 14-17, 2018, Proceedings, Part II 30. pp. 37--44. Springer (2018)

\bibitem{massey2019basic}
Massey, W.S.: A basic course in algebraic topology, vol.~127. Springer (2019)

\bibitem{mcmullen1971zonotopes}
McMullen, P.: On zonotopes. Transactions of the American Mathematical Society  \textbf{159},  91--109 (1971)

\bibitem{mitchell2007comparing}
Mitchell, I.M.: Comparing forward and backward reachability as tools for safety analysis. In: Hybrid Systems: Computation and Control: 10th International Workshop, HSCC 2007, Pisa, Italy, April 3-5, 2007. Proceedings 10. pp. 428--443. Springer (2007)

\bibitem{mortari1997n}
Mortari, D.: n-dimensional cross product and its application to the matrix eigenanalysis. Journal of Guidance, Control, and Dynamics  \textbf{20}(3),  509--515 (1997)

\bibitem{neumaier1993wrapping}
Neumaier, A.: The wrapping effect, ellipsoid arithmetic, stability and confidence regions. Springer (1993)

\bibitem{richter1994zonotopal}
Richter-Gebert, J., Ziegler, G.M.: Zonotopal tilings and the bohne-dress theorem. Contemporary Mathematics  \textbf{178},  211--211 (1994)

\bibitem{rwth2014under}
Rwth, X.C., Sankaranarayanan, S., {\'A}brah{\'a}m, E.: Under-approximate flowpipes for non-linear continuous systems. In: 2014 Formal Methods in Computer-Aided Design (FMCAD). pp. 59--66. IEEE (2014)

\bibitem{sadraddini2019linear}
Sadraddini, S., Tedrake, R.: Linear encodings for polytope containment problems. In: 2019 IEEE 58th Conference on Decision and Control (CDC). pp. 4367--4372. IEEE (2019)

\bibitem{schoels2020nmpc}
Schoels, T., Palmieri, L., Arras, K.O., Diehl, M.: An nmpc approach using convex inner approximations for online motion planning with guaranteed collision avoidance. In: 2020 IEEE International Conference on Robotics and Automation (ICRA). pp. 3574--3580. IEEE (2020)

\bibitem{shephard1974combinatorial}
Shephard, G.C.: Combinatorial properties of associated zonotopes. Canadian Journal of Mathematics  \textbf{26}(2),  302--321 (1974)

\bibitem{wan2009numerical}
Wan, J., Vehi, J., Luo, N.: A numerical approach to design control invariant sets for constrained nonlinear discrete-time systems with guaranteed optimality. Journal of Global Optimization  \textbf{44},  395--407 (2009)

\bibitem{xue2019inner}
Xue, B., Fr{\"a}nzle, M., Zhan, N.: Inner-approximating reachable sets for polynomial systems with time-varying uncertainties. IEEE Transactions on Automatic Control  \textbf{65}(4),  1468--1483 (2019)

\bibitem{xue2016under}
Xue, B., She, Z., Easwaran, A.: Under-approximating backward reachable sets by polytopes. In: International Conference on Computer Aided Verification. pp. 457--476. Springer (2016)

\bibitem{xue2022reach}
Xue, B., Zhan, N., Fränzle, M., Wang, J., Liu, W.: Reach-avoid verification based on convex optimization. IEEE Transactions on Automatic Control  \textbf{69}(1),  598--605 (2024)

\bibitem{yang2018comparison}
Yang, X., Scott, J.K.: A comparison of zonotope order reduction techniques. Automatica  \textbf{95},  378--384 (2018)

\bibitem{ziegler2012lectures}
Ziegler, G.M.: Lectures on polytopes, vol.~152. Springer Science \& Business Media (2012)

\end{thebibliography}
\bibliographystyle{splncs04}

\clearpage

\renewcommand\thesubsection{\Alph{subsection}}

\section*{Appendix}

\subsection{Proofs of Theorems in the Paper}
\label{app: proof}
\noindent\textbf{The proof of Theorem \ref{thm: Boundary of a Zonotope}:}

to prove the soundness of Alg. \ref{alg: boundaries of a zonotope}, we firstly give the boundary of a parallelotope in Lemma \ref{lemma:boundaries_paeallelotope}. 


\begin{lemma}[Boundary of a parallelotope]

\label{lemma:boundaries_paeallelotope}
An $n$-dimensional parallelotope $P = \left \langle \bm{c},\bm{G} \right \rangle$ has $2n$ facets which can be characterized by following form,

$$
\partial P^{(i)} = \left \langle \bm{c}+\bm{G}(\cdot,i),\bm{G}^{\langle i \rangle} \right \rangle,~ 
\partial P^{(n+i)} = \left \langle \bm{c}-\bm{G}(\cdot,i),\bm{G}^{\langle i \rangle} \right \rangle
$$

where $i\in\mathbb{N}_{[1,n]}$.
\end{lemma}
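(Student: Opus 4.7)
The plan is to exploit the fact that a parallelotope is, by definition, the affine image of the unit hypercube $[-1,1]^n$ under an invertible linear map, and then transfer the well-known facet structure of the hypercube to $P$ via this map. Concretely, since $\bm{G} \in \mathbb{R}^{n,n}$ has rank $n$, the map $\Phi : [-1,1]^n \to P$ defined by $\Phi(\bm{\alpha}) = \bm{c} + \bm{G}\bm{\alpha}$ is a homeomorphism (in fact an affine isomorphism) from the unit hypercube onto $P$. I would first state and briefly justify this observation, emphasising uniqueness of the representation $\bm{x} = \bm{c} + \bm{G}\bm{\alpha}$ that it provides.

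Next, I would invoke the standard description of the faces of the unit hypercube: its $2n$ facets are exactly the sets $\{\bm{\alpha} \in [-1,1]^n : \alpha_i = \sigma\}$ for $i \in \mathbb{N}_{[1,n]}$ and $\sigma \in \{-1,+1\}$. Because $\Phi$ is an affine isomorphism, it maps faces to faces of the same dimension, and the image of a facet of $[-1,1]^n$ is a facet of $P$. I would then simply compute these images: the set $\{\alpha_i = +1\}$ is sent to $\{\bm{c} + \bm{g}_i + \sum_{j \neq i} \alpha_j \bm{g}_j : |\alpha_j| \le 1\}$, which is precisely the zonotope $\langle \bm{c} + \bm{G}(\cdot,i), \bm{G}^{\langle i\rangle}\rangle$, and analogously for $\{\alpha_i = -1\}$, matching the claimed form.

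To close the argument I would verify two small points. First, each claimed set is an $(n-1)$-dimensional zonotope, which holds since deleting one column from the rank-$n$ matrix $\bm{G}$ leaves $n-1$ linearly independent columns. Second, these $2n$ facets are pairwise distinct and exhaust $\partial P$: distinctness follows from the uniqueness of the $\bm{\alpha}$-coordinates given by invertibility of $\bm{G}$, and exhaustiveness follows from $\partial P = \Phi(\partial [-1,1]^n)$ together with $\partial [-1,1]^n = \bigcup_{i,\sigma} \{\alpha_i = \sigma\}$.

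The proof is essentially routine; the only mild subtlety, and the step I would handle most carefully, is justifying that $\Phi$ truly preserves the facial structure rather than merely sending boundary to boundary. I would address this by noting that $\Phi$ and its inverse $\Phi^{-1}(\bm{x}) = \bm{G}^{-1}(\bm{x}-\bm{c})$ are both affine, hence send supporting hyperplanes to supporting hyperplanes, so the combinatorial face lattice of $P$ is isomorphic to that of $[-1,1]^n$. Everything else reduces to the explicit substitution above.
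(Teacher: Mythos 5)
Your proposal is correct and follows essentially the same route as the paper's proof: both view $P$ as the image of $[-1,1]^n$ under the homeomorphism $\bm{\alpha} \mapsto \bm{c} + \bm{G}\bm{\alpha}$ and push the hypercube's $2n$ facets forward (the paper cites a corollary that homeomorphisms map boundary to boundary, where you instead argue via affine preservation of the face lattice). Your added checks of distinctness and $(n-1)$-dimensionality are a slightly more careful finish than the paper's, but not a different method.
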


\begin{proof}

An $n$-dimensional parallelotope $P=\left \langle \bm{c},\bm{G} \right \rangle$ is actually the codomain of the linear map $\bm{h}: [-1,1]^n \mapsto P \subset \mathbb{R}^{n}$, where $\bm{h}(\bm{\alpha})=\bm{c}+\bm{G\alpha}, \bm{\alpha} \in [-1,1]^n$. In fact, $\bm{h}$ is a homeomorphism map for which it is a bijection and both the function $\bm{h}$ and the inverse function $\bm{h}^{-1}$ are continuous. According Corollary 6.7 in \cite{massey2019basic}, the function $\bm{h}$ maps interior points onto interior points and boundary points onto boundary points. For the domain $[-1,1]^n$, it's easy to obtain its boundary by fixing a certain dimension as $-1$ or $1$ while maintaining other dimensions (e.g., $-1 \times [-1,1]^{n-1}$ and $1 \times [-1,1]^{n-1}$). By mapping these facets of domain onto codomain, the boundary of the parallelotope $P$ can be expressed as following zonotopes.
$$
\partial P^{(i)} = \left \langle \bm{c}+\bm{G}(\cdot,i),\bm{G}^{\langle i \rangle} \right \rangle,~ 
\partial P^{(n+i)} = \left \langle \bm{c}-\bm{G}(\cdot,i),\bm{G}^{\langle i \rangle} \right \rangle
$$

where $i\in\mathbb{N}_{[1,n]}$.
\end{proof}

At this point, we give a proof of Theorem \ref{thm: Boundary of a Zonotope}.
\begin{proof}

Firstly considering an $n$-dimensional parallelotope $P=\left \langle \bm{c},\bm{G} \right \rangle$, where $\bm{G} = (\bm{g}_i)_{1 \leq i \leq n}$, we add a new generator $\bm{g}_{n+1}$ to its generator matrix, e.g. $\hat{\bm{G}} = (\bm{G}, \bm{g}_{n+1})$. This operation would form a new zonotope $\hat{Z}= \langle \bm{c},\hat{\bm{G}} \rangle$. For a certain facet $\partial P^{(i)} = \left \langle \bm{c}+\bm{g}_i,\bm{G}^{\langle i \rangle} \right \rangle$ of $P$, the addition of $\bm{g}_{n+1}$ may cause two cases (the two cases are also suitable for its symmetric facet $\partial P^{(n+i)} = \left \langle \bm{c}-\bm{g}_i,\bm{G}^{\langle i \rangle} \right \rangle$):

\textbf{Case 1}: $\mathtt{CP}(\bm{G}^{\langle i \rangle}) \cdot \bm{g}_{n+1} = 0$. This means that adding $\bm{g}_{n+1}$ won't change the position of $\partial P^{(i)}$, instead, it would change the shape  of $\partial P^{(i)}$ by replacing generator matrix $\bm{G}^{\langle i \rangle}$ with $(\bm{G}^{\langle i \rangle},\bm{g}_{n+1})$.

\textbf{Case 2}: $\mathtt{CP}(\bm{G}^{\langle i \rangle}) \cdot \bm{g}_{n+1} \neq 0$. This means that adding $\bm{g}_{n+1}$ would change the position of $\partial P^{(i)}$ and push it away from the center $\bm{c}$. if $\mathtt{CP}(\bm{G}^{\langle i \rangle}) \cdot \bm{g}_{i} > 0$, it represents that $\partial P^{(i)}$ is on the positive direction oriented by $\mathtt{CP}(\bm{G}^{\langle i \rangle})$ and negative direction otherwise. Taking the situation $\mathtt{CP}(\bm{G}^{\langle i \rangle}) \cdot \bm{g}_{i} > 0$ as an example, if $\mathtt{CP}(\bm{G}^{\langle i \rangle}) \cdot \bm{g}_{n+1} >$ ({resp.} $<$) $0$, $\partial P^{(i)}$ would translate along with $\bm{g}_{n+1}$ (resp. $-\bm{g}_{n+1}$) to ensure being a new facet away from the center $\bm{c}$, the new facet of the new zonotope $\hat{Z}$ would be $\left \langle \bm{c}+\bm{g}_i+\bm{g}_{n+1},\bm{G}^{\langle i \rangle} \right \rangle$ (resp. $\left \langle \bm{c}+\bm{g}_i-\bm{g}_{n+1},\bm{G}^{\langle i \rangle} \right \rangle$).

Viewing zonotope $\hat{Z}= \langle \bm{c},\hat{\bm{G}} \rangle$ is derived from parallelotope $\hat{P}= \langle \bm{c},\hat{\bm{G}}^{\langle i \rangle} \rangle$ adding a generator $\bm{g}_i$, we execute the same progress as case 1 and 2 do (If $\hat{\bm{G}}^{\langle i \rangle}$ isn't full rank, then skip), we loop all the $i\in\mathbb{N}_{[1,n]}$ then all the facets of $\hat{Z}$ can be obtained.


Now we consider a general zonotope $Z =\left \langle \overline{\bm{c}},\overline{\bm{G}} \right \rangle$. Inspired by aforementioned progress, to obtain all the facets of $Z$, firstly we collect all the potential hyperplanes, then for each hyperplane, these generators which don't lie in the hyperplane are chosen to make the hyperplane as far away from $\overline{\bm{c}}$ as possible, i.e. fix the direction of normal vector of the hyperplane as the positive direction, for each generator which isn't parallel with the hyperplane, if its direction is same as the normal vector, then add it to $\bm{c}$, else subtract it form $\bm{c}$. Notice that for a hyperplane, there are two facets which are symmetrical to $\bm{c}$, to obtain another facet, one just have to do same progress by changing the positive operation to negative operation.

\end{proof}

\noindent\textbf{The proof of Theorem \ref{thm: partition of a Zonotope}:}


In the sequel, we would utilize the concept of Minkowski sum, i.e., for two sets $\mathcal{S}_1, \mathcal{S}_2$, their Minkowski sum is defined as $\mathcal{S}_1 + \mathcal{S}_2 := \{s_1 + s_2 | s_1 \in \mathcal{S}_1,  s_2 \in \mathcal{S}_2\}$. For Minkowski sum of zonotopes, it has following properties:
\begin{enumerate}
    \item zonotopes are closed under Minkowski sum, let $Z_1 = \langle \bm{c}_1, \bm{G}_1 \rangle, Z_2 = \langle \bm{c}_2, \bm{G}_2 \rangle$, then $Z_1 + Z_2 = \langle \bm{c}_1 + \bm{c}_2, \bm{G}_1 + \bm{G}_2 \rangle$; 

\item for all zonotopes $Z_1, Z_2, Z_3$, $(Z_1 \cap Z_2) + Z_3 = (Z_1 + Z_3) \cap (Z_2 + Z_3), (Z_1 \cup Z_2) + Z_3 = (Z_1 + Z_3) \cup (Z_2 + Z_3)$.
\end{enumerate}

We would slightly abuse symbols if it doesn't lead to ambiguity in the proof. For a zonotope $Z$ and a generator $\bm{g}$, $Z+\bm{g}$ means that $Z + \langle\bm{g},\bm{0}\rangle$. Further, for a set $\mathcal{Z}$ containing zonotopes, $\mathcal{Z} + \bm{g}:= \{ Z + \bm{g} \mid Z \in \mathcal{Z}\}$.  Next we begin our proof of theorem \ref{thm: partition of a Zonotope}.    

\begin{proof}

    
    
    Since the symmetry of the facets of $Z$, its boundary matrix $\bm{B}$ would have the form below,

    $$
\bm{B} = \left[\begin{array}{llll}
~~0 &\cdots \\
~~0 &\cdots \\
\cdots &\cdots \\
-1 &\cdots\\
\cdots &\cdots \\
~1& \cdots \\
\cdots&\cdots
\end{array}\right],
$$
where the numbers of $-1$ in $1$ of $\bm{B}(\cdot,j)$ are equal since the operations in Alg. \ref{alg: boundaries of a zonotope}, Line $10$ and $12$. We denote the set of facets for which the $j$-th entry of the rows is $-1$ (resp. $1$) as $\mathcal{F}^{(j)}_{-1}$ (resp. $\mathcal{F}^{(j)}_{1}$). The number of facets is same for $\mathcal{F}^{(j)}_{-1}$ and $\mathcal{F}^{(j)}_{1}$ , meanwhile the symmetric facets of $\mathcal{F}^{(j)}_{-1}$ and $\mathcal{F}^{(j)}_{1}$ share the same generator matrix. We set $s$ to be the number of facets in $\mathcal{F}^{(1)}_{-1}$ and let $I_{i}$ be the indicator set containing the generator indexes of a facet $F_{i}$ (e.g., for zonotope $Z = \langle \bm{c}, (\bm{g}_1, \bm{g}_2,\bm{g}_3)\rangle$, $F_1 = \langle \bm{c}+\bm{g}_1, ( \bm{g}_2,\bm{g}_3)\rangle$ is one of its facets, then $I_1 = \{2,3\}$). 

Next we use mathematical induction.

\textbf{Base case}: $p = n$, the tiling of $Z$ is itself.

    \textbf{Inductive step}: suppose the conclusion is true for $p \geq n$, then next we would prove it's true for $p \geq n+1$.  
    

    for the $\bm{g}_1$,
    it connects two sets of symmetry facets $\mathcal{F}^{(1)}_{-1}$ and $\mathcal{F}^{(1)}_{1}$, moving the facets in the set $\mathcal{F}^{(1)}_{-1}$ along the direction of generator $\bm{g}_1$ with the length of $2\bm{g}_1$ would form a new zonotope $Z^{(1)} = \langle \bm{c} + \bm{g}_1,\bm{G}^{\langle1\rangle} \rangle$ 
    and the moving trajectory would form a few zonotopes, i.e., $\mathcal{P} = \{ P_i = F_i + \langle \bm{g}_1,\bm{g}_1 \rangle \mid F_i \in \mathcal{F}^{(1)}_{-1} \}$. (The progress of one iteration of tiling can be visualized in Fig. \ref{fig:tiling_proof_fig}.)



Firstly we would prove the zonotopes after one-step tiling don't have overlaps. 
\begin{itemize}
    \item[1.] $\forall P_i,P_j \in \mathcal{P}, i \neq j$, $P_i \cap P_j$ is a zonotope with rank less than or equal to $n-1$, i.e., $P_i^\circ \cap P_j^\circ = \emptyset$.
    
    $P_i \cap P_j = (F_i + \langle \bm{g}_1,\bm{g}_1 \rangle) \cap (F_j + \langle \bm{g}_1,\bm{g}_1 \rangle) = \langle \bm{g}_1,\bm{g}_1 \rangle + (F_i \cap F_j)$, $F_i \cap F_j$ is a zonotope whose rank is less than or equal to $n-2$, then $\langle \bm{g}_1,\bm{g}_1 \rangle + (F_i \cap F_j)$ is a zonotope whose rank is less than or equal to $n-1$, thus $P_i^\circ \cap P_j^\circ = \emptyset$.
    
    \item[2.] $\forall P_i \in \mathcal{P}$, $(Z^{(1)})^\circ \cap P_i^\circ = \emptyset$. 
    

    Firstly, according to Alg. \ref{alg: boundaries of a zonotope} and Theorem \ref{thm: Boundary of a Zonotope}, it's easy to get the boundary of $Z^{(1)}$,  which is $2\bm{g}_1 + \mathcal{F}^{(1)}_{-1} = \{ F_i + 2\bm{g}_1 \mid F_i \in \mathcal{F}^{(1)}_{-1} \} $ and $\mathcal{F}^{(1)}_{1}$. $F_i +2\bm{g}_1$ and $ F_i$ is two facets of $P_i$. 
    For a facet $F$, let $\bm{c}(F)$ and $\bm{G}(F)$ denote the center and generator matrix of $F$ respectively. Further,  let $\hat{\bm{G}}(F)$ represent the $n \times (n-1)$ submatrix of $F$'s generator matrix containing $n-1$ independent column vectors. 
    For the hyperplane $\mathcal{H}_i = \left\{\bm{x}\in \mathbb{R}^n~\big|~ \mathtt{CP}\left(\hat{\bm{G}}(F_i)\right)\cdot\left(\bm{x} - \bm{c}(F_i) -2\bm{g}_1\right)= 0 \right\} $ determined by $F_i + 2\bm{g}_1$, it divides $\mathbb{R}^n$ into two half-spaces $\mathcal{H}_i^+$ and $\mathcal{H}_i^-$. Let $\mathcal{H}_i^+$ be the half-space on the side pointed by the direction of $\bm{g}_1$ and $\mathcal{H}_i^-$ on the opposite side (see Fig. \ref{fig:tiling_proof_fig}), since $\mathcal{F}^{(1)}_{1}$ lies in $\mathcal{H}_i^+$ and $F_i$ lies in $\mathcal{H}_i^-$, then $Z^{(1)}$ and $P_i$ lies in $\mathcal{H}_i^+$ and $\mathcal{H}_i^-$ respectively (recall that a zonotope is a polytope which can be viewed as the intersection of half-spaces determined by its boundary \cite{althoff2010computing}).
  \end{itemize}  

\vspace{-5mm}
\begin{figure}[htbp]
    \centering
        \includegraphics[width=3in]{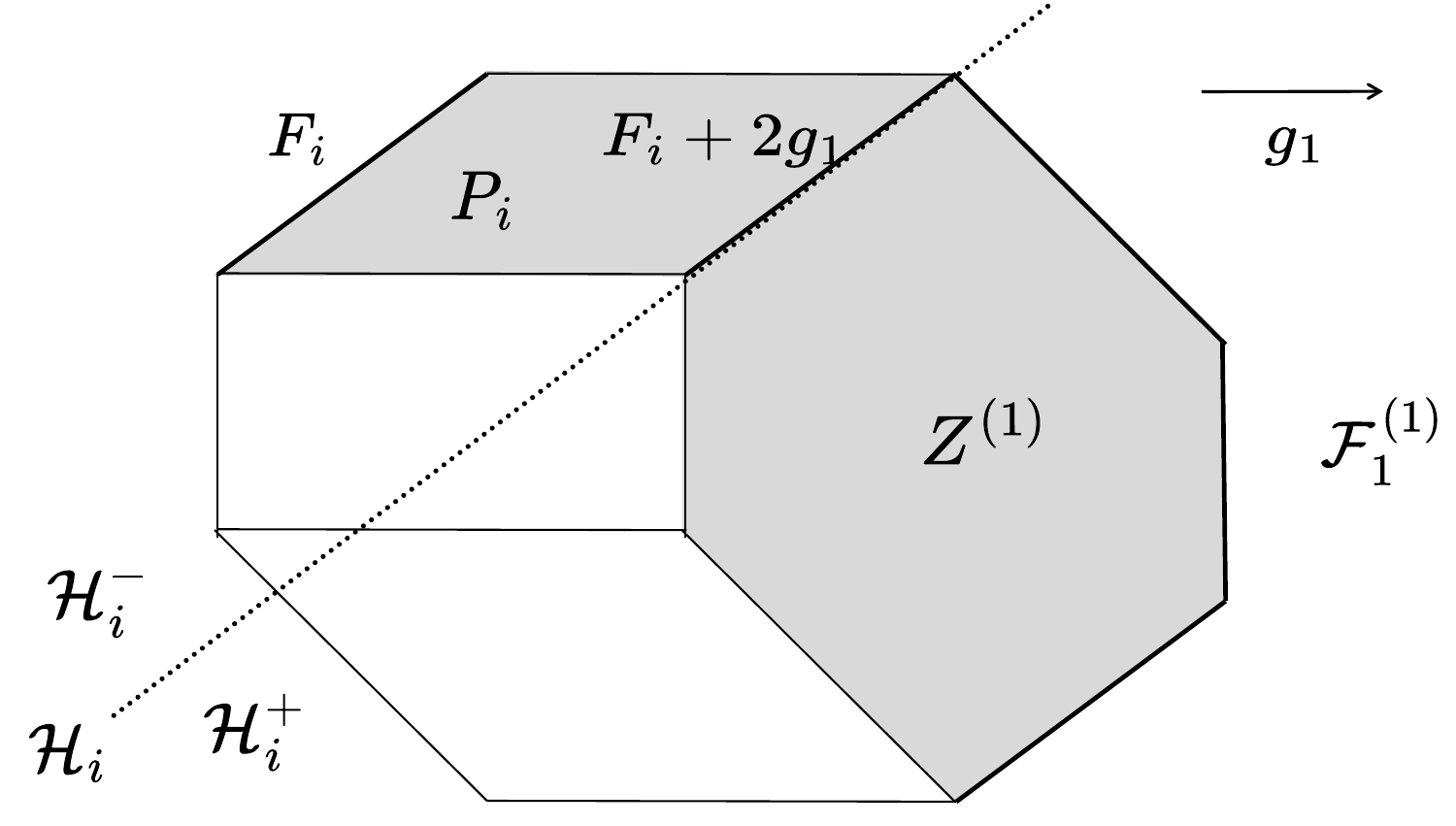}
    \caption{
    Illustration in the proof of Theorem \ref{thm: partition of a Zonotope}.}
    \label{fig:tiling_proof_fig}
\end{figure}

    Secondly we would prove that the union of zonotopes after one-step tiling is the zonotope $Z$, i.e., $Z^{(1)} \cup \bigcup\limits_{i\in \mathbb{N}{[1,s]}} P_i= Z$.

    Recalling that a volume of an $n$-dimensional zonotope with $p$ generators $Z = \langle \bm{c}, \bm{G} \rangle = \langle \bm{c}, (\bm{g}_1, \bm{g}_2, \dots, \bm{g}_p) \rangle$ is 
    \begin{equation*}
        \operatorname{vol}(Z) = 2^d \sum\limits_{1\leq j_1 < j_2 < \cdots <j_n \leq p} \left\vert \operatorname{det}\left((\bm{g}_{j_1}, \bm{g}_{j_2}, \dots, \bm{g}_{j_n})\right) \right\vert,
    \end{equation*}
    which was discovered by \cite{shephard1974combinatorial}, then 

    \begin{equation*}
    \begin{aligned}
         &\operatorname{vol}(Z^{(1)}) + \sum \limits_{i \in \mathbb{N}_{[1,s]}} \operatorname{vol}(P_i) \\
         = & 2^d \left(\sum\limits_{2\leq j_1 < j_2 < \cdots <j_n \leq p} \left| \operatorname{det}\left((\bm{g}_{j_1}, \bm{g}_{j_2}, \dots, \bm{g}_{j_n})\right) \right|\right. \\&+ \left. \sum \limits_{i \in \mathbb{N}_{[1,s]}} \sum\limits_{\substack {2\leq j_1 < j_2 < \cdots <j_{n-1} \leq p \\ j_1, j_2, \cdots,j_{n-1} \in I_i}} \left\vert \operatorname{det}\left((\bm{g}_1, \bm{g}_{j_1}, \dots, \bm{g}_{j_{n-1}})\right) \right\vert \right).
    \end{aligned}
    \end{equation*}
    Recalling Algorithm \ref{alg: boundaries of a zonotope} and Theorem \ref{thm: Boundary of a Zonotope}, the matrix set $\mathcal{G} = \{ \bm{G}(F_i) \mid F_i \in \mathcal{F}^{(1)}_{-1}\}$ is the collection all the combination of generators of $Z$ except $\bm{g}_1$ with rank $n-1$, thus $\mathcal{G}^{\prime} = \{ (\bm{g}_1, \bm{G}(F_i)) \mid F_i \in \mathcal{F}^{(1)}_{-1}\}$ collects all the combination of generators of $Z$ including $\bm{g}_1$ with rank $n$. In other words, if a combination of generators including $\bm{g}_1$ can't be found in $\mathcal{G}^{\prime}$ or isn't a submatrix of matrix in $\mathcal{G}^{\prime}$, then its rank must be less than $n$, leading to zero determinant. Therefore, we have 
      \begin{equation*}
    \begin{aligned}
    &\sum \limits_{i \in \mathbb{N}_{[1,s]}} \sum\limits_{\substack {2\leq j_1 < j_2 < \cdots <j_{n-1} \leq p \\ j_1, j_2, \cdots,j_{n-1} \in I_i}} \left\vert \operatorname{det}\left((\bm{g}_1, \bm{g}_{j_1}, \dots, \bm{g}_{j_{n-1}})\right)\right\vert  \\=& \sum\limits_{2\leq j_1 < j_2 < \cdots <j_{n-1} \leq p }\left\vert \operatorname{det}\left((\bm{g}_1, \bm{g}_{j_1}, \dots, \bm{g}_{j_{n-1}})\right) \right\vert,
    \end{aligned}
    \end{equation*}
    thus
        \begin{equation*}
    \begin{aligned}
         &\operatorname{vol}(Z^{(1)}) + \sum \limits_{i \in \mathbb{N}_{[1,s]}} \operatorname{vol}(P_i) \\
         = & 2^d \left(\sum\limits_{2\leq j_1 < j_2 < \cdots <j_n \leq p} \left| \operatorname{det}\left((\bm{g}_{j_1}, \bm{g}_{j_2}, \dots, \bm{g}_{j_n})\right) \right| \right. \\ &+  \left.\sum\limits_{2\leq j_1 < j_2 < \cdots <j_{n-1} \leq p } \left\vert \operatorname{det}\left((\bm{g}_1, \bm{g}_{j_1}, \dots, \bm{g}_{j_{n-1}})\right) \right\vert\right) \\ 
         =& 2^d \sum\limits_{1\leq j_1 < j_2 < \cdots <j_n \leq p} | \operatorname{det}\left((\bm{g}_{j_1}, \bm{g}_{j_2}, \dots, \bm{g}_{j_n})\right) | \\
         =& \operatorname{vol}(Z).
    \end{aligned}
    \end{equation*}
Besides, for all $\bm{x} \in Z^{(1)} \cup \bigcup\limits_{i\in \mathbb{N}{[1,s]}} P_i$, $\bm{x} \in Z$, then $Z^{(1)} \cup \bigcup\limits_{i\in \mathbb{N}{[1,s]}} P_i = Z$.

    After one-step tiling, the boundary matrix $\bm{B}$ represents all the facets of $Z^{(1)}$, which has $p-1$ generators, thus according to the inductive principle, the conclusion holds.

\end{proof}

\subsection{Illustrative Examples}
\label{app:extra examples}

\begin{example}
\label{ex:boundary}
Consider a $3$-dimensional zonotope with $4$ generators $Z = \left \langle \bm{c},\bm{G} \right \rangle$, where 
$$\bm{c} = \left[\begin{array}{l}
4 \\
4 \\
2
\end{array}\right], \bm{G} = (\bm{g}_1,\bm{g}_2,\bm{g}_3,\bm{g}_4) = \left[\begin{array}{llll}
1& 0& 1& 0\\
0& 1& 1& 0\\
0& 0& 0& 1
\end{array}\right].$$
We follow the computational procedures in Alg. \ref{alg: boundaries of a zonotope} to obtain all the facets of the zonotope $Z$, i.e., $\partial Z$, which is detailed below. 
\begin{enumerate}

    \item Find all the $n\times(n-1)$ submatrices of $\bm{G}$ with rank $n-1$,  
    $\mathcal{B} := \{\bm{B}_1,\bm{B}_2,\bm{B}_3, \bm{B}_4,\\ \bm{B}_5,\bm{B}_6 \} =\left\{(\bm{g}_1,\bm{g}_2),(\bm{g}_1,\bm{g}_3),(\bm{g}_1,\bm{g}_4),(\bm{g}_2,\bm{g}_3),(\bm{g}_2,\bm{g}_4),(\bm{g}_3,\bm{g}_4)\right\}$.
    \item $\bm{B} := \bm{B}_1 = (\bm{g}_1, \bm{g}_2) $; $\bm{c}_{b}^{(1)}= \bm{c}_{b}^{(2)} := \bm{c}$; $\bm{v} :=\mathtt{CP}(\bm{B}_1)=( 0,0,1)^\intercal$.
    \begin{itemize}
     \item[2a.] For $\bm{g}_3= (1,1,0)^\intercal$, due to $\bm{v} \cdot \bm{g}_3 =0$, then $\bm{B}:=(\bm{B},\bm{g}_3)=(\bm{g}_1,\bm{g}_2,\bm{g}_3)$.
        \item[2b.] For $\bm{g}_4= ( 0,0,1)^\intercal$, due to $ \bm{v} \cdot \bm{g}_4 = 1>0$, then $\bm{c}_{b}^{(1)}:=\bm{c}_{b}^{(1)}-\bm{g}_4 = ( 4,4,1)^\intercal$ and 
        $\bm{c}_{b}^{(2)}:=\bm{c}_{b}^{(2)}+\bm{g}_4 = (4,4,3)^\intercal$.      
    \end{itemize}  
    \item Now we have two facets $Z_b^{(1)} = \left \langle \bm{c}_{b}^{(1)}, \bm{B} \right \rangle$ and $\partial Z_b^{(2)} = \left \langle \bm{c}_{b}^{(2)}, \bm{B} \right \rangle$, put them into boundary set $\partial Z$. 
    \item Remove $\bm{B}_1, \bm{B}_2,\bm{B}_4$ from $\mathcal{B}$ since $\bm{B}_1, \bm{B}_2,\bm{B}_4$ are the submatrices of $\bm{B}$, now $\mathcal{B} := \{\bm{B}_3, \bm{B}_5,\bm{B}_6\}$.
    \item Repeat $(2)$-$(4)$ until $\mathcal{B} = \emptyset$, finally all eight facets of zonotope $Z$ is obtained.
\end{enumerate}
\end{example}

\begin{example}

\label{ex:boudary matrix}

Consider the zonotope $Z$ in Example \ref{ex:boundary},
which has eight facets. According to Alg. \ref{alg: boundaries of a zonotope}, we can obtain its  boundary matrix:
$$
\left[\begin{array}{llll}
~~0 &~~0& ~~0& -1 \\
~~0 &~~0 &~~0& ~~1 \\
~~0 &~~1& ~~1&~~ 0 \\
~~0 &-1& -1&~~ 0 \\
-1 &~~0& -1& ~~0 \\
~~1 &~~0 &~~1& ~~0 \\
-1& ~~1& ~~0& ~~0 \\
~~1 &-1& ~~0 &~~0 
\end{array}\right]
$$
Take the first row as an instance, the facet can be obtained as follows:
$$
\partial Z_1 =\left\langle\left[\begin{array}{l}
4 \\
4 \\
2
\end{array}\right]-\left[\begin{array}{l}
0 \\
0 \\
1
\end{array}\right],\left[\begin{array}{lll}
1& 0&1\\
0& 1&1\\
0& 0&0
\end{array}\right]\right\rangle=\left\langle\left[\begin{array}{l}
4 \\
4 \\
1
\end{array}\right],\left[\begin{array}{lll}
1& 0&1\\
0& 1&1\\
0& 0&0
\end{array}\right]\right\rangle.
$$

\end{example}

\begin{example}
\label{ex tiling}

Consider the zonotope $Z$ in Example \ref{ex:boundary} again. Its boundary matrix $\bm{B}$ is shown in Example \ref{ex:boudary matrix}.


For the rows whose the $1$-th entry equaled with $0$, delete them from boundary matrix $\bm{B}$, Then 
$$
\bm{B}=\left[\begin{array}{llll}
-1 &0& -1& 0 \\
1 &0 &1& 0 \\
-1& 1& 0& 0 \\
1 &-1& 0 &0 
\end{array}\right]
$$
 For the rows whose the $1$-th entry equaled with $-1$, add them to tiling matrix $\bm{T}$, change the value of the $1$-th entry to $0$ in tiling matrix $\bm{T}$ and and change the value of the $1$-th entry to $1$ in boundary matrix $\bm{B}$. i.e.,
$$
\bm{T} = \left[\begin{array}{llll}
0 &0& -1& 0 \\
0& 1& 0& 0 \\
\end{array}\right],
\bm{B}=\left[\begin{array}{llll}
1 &0& -1& 0 \\
1 &0 &1& 0 \\
1& 1& 0& 0 \\
1 &-1& 0 &0 
\end{array}\right]
$$
Add the last row of boundary matrix $\bm{B}$ to tiling matrix $\bm{T}$ and change the value of its $i$-th  entry ($i=2,3,4$) to $0$ in tiling matrix $\bm{T}$. At this point, we have
$$
\bm{T}=\left[\begin{array}{llll}
0 &0& -1& 0 \\
0& 1& 0& 0 \\
1 &0 &0 &0
\end{array}\right]
$$
Obtain sub-zonotopes form each row of tiling matrix $\bm{T}$ with the operations in Def. \ref{def boundary matrix}, i.e.,
$$
 Z_{1}=\left\langle\left[\begin{array}{l}
3 \\
3 \\
2
\end{array}\right],\left[\begin{array}{lll}
1& 0& 0\\
0& 1& 0\\
0& 0& 1
\end{array}\right]\right\rangle, 
 Z_{2}=\left\langle\left[\begin{array}{l}
4 \\
5 \\
2
\end{array}\right],\left[\begin{array}{lll}
1& 1& 0\\
0& 1& 0\\
0& 0& 1
\end{array}\right]\right\rangle, 
 Z_{3}=\left\langle\left[\begin{array}{l}
5 \\
4 \\
2
\end{array}\right],\left[\begin{array}{lll}
0& 1& 0\\
1& 1& 0\\
0& 0& 1
\end{array}\right]\right\rangle.
$$
The tiling of $Z$ can be visualized in Fig. \ref{fig:tiling of a zonotope}.
\end{example}

\begin{example}

\label{ex2}
Given $\partial O_{k+1}^{(1)} = \left \langle \begin{bmatrix} 1  \\  0
\end{bmatrix},
\begin{bmatrix} 1.2 & 0   \\ 0 & 0.2 
\end{bmatrix} \right \rangle$ and $O(h;U_k)= \left \langle \begin{bmatrix} 1  \\  1
\end{bmatrix},
\begin{bmatrix} 1 & 0   \\ 0 & 1 
\end{bmatrix} \right \rangle$, we compute inner-approximation candidate $U_{k+1}^\prime$ using unsorted and sorted order respectively. In this example we set $\epsilon =0.01$ for better illustration, the smallest $\epsilon$ equal to LP numerical accurate is suggested in practice.

\textbf{Unsorted case:}
    \begin{enumerate}
        \item Solving LP \eqref{min} and \eqref{max} for $\bm{g}_1=[1,0]^\intercal$, we get its domain $[\underline{\alpha_1}, \overline{\alpha_1}]=[-1,1]$, thus $\bm{g}_1$ is deleted from generator matrix $\bm{G}_u$.
        \item Solving LP \eqref{min} and \eqref{max} for $\bm{g}_2=[0,1]^\intercal$, we get its domain $[\underline{\alpha_1}, \overline{\alpha_1}]=[-1,-0.8]$, thus $[\underline{\gamma}, \overline{\gamma}] = [-0.79,1]$. Update $\bm{c}_u = \bm{c}_u + 0.5 (\overline{\gamma}+\underline{\gamma}) \bm{g}_2 = [1,1.105]^\intercal$ and $\bm{g}_2 = 0.5 (\overline{\gamma}-\underline{\gamma}) \bm{g}_2 =[0,0.895]^\intercal$.
        \item Finally, we get inner-approximation candidate $U_{k+1}^\prime = \left \langle \begin{bmatrix} 1  \\  1.105
\end{bmatrix},
\begin{bmatrix}  0   \\  0.895 
\end{bmatrix} \right \rangle$.
    \end{enumerate}
    
    \textbf{Sorted case:}   
    \begin{enumerate}
        \item For $\partial O_{k+1}^{(1)}$, its top $n-1=1$ generator by norm is $[1.2,0]^\intercal$, we compute its cross product of approximating hyperplane $\mathtt{AT}(\partial O_{k+1}^{(1)}) = [0,-1.2]^\intercal$.
        \item Sort $\{\bm{g}_1, \bm{g}_2\}$ according to $\frac{\Vert\bm{g}_l\cdot \mathtt{AT}\left(\partial O_{k+1}^{(i)}\right)\Vert}{\Vert \bm{g}_l \Vert \Vert \mathtt{AT}\left(\partial O_{k+1}^{(i)}\right) \Vert}$ from large to small, we get the new order of generators  $\{\bm{g}_2, \bm{g}_1\} = \{[0,1]^\intercal,[1,0]^\intercal\}$.
        \item Solving LP (\ref{min}) and (\ref{max}) for $\bm{g}_2=[0,1]^\intercal$, we get its domain $[\underline{\alpha_1}, \overline{\alpha_1}]=[-1,-0.8]$, thus $[\underline{\gamma}, \overline{\gamma}] = [-0.79,1]$. Update $\bm{c}_u =\bm{c}_u + 0.5 (\overline{\gamma}+\underline{\gamma}) \bm{g}_2 = [1,1.105]^\intercal$ and $\bm{g}_2 = 0.5 (\overline{\gamma}-\underline{\gamma}) \bm{g}_2 =[0,0.895]^\intercal$.
        \item Solving LP (\ref{min}) and (\ref{max}) for $\bm{g}_1=[1,0]^\intercal$, neither of them have feasible solution, thus end this process.
        \item  Finally, we get inner-approximation candidate $U_{k+1}^\prime = \left \langle \begin{bmatrix} 1  \\  1.105
\end{bmatrix},
\begin{bmatrix}  0  & 1 \\  0.895 & 0
\end{bmatrix} \right \rangle$.
    \end{enumerate}

The results of sorted and unsorted cases are visualized in figure \ref{ex2F2} and figure \ref{ex2F3} respectively. In the unsorted case $U_{k+1}^\prime$ is reduced to a $1$-dimensional zonotope while in the sorted case $U_{k+1}^\prime$ is $2$-dimensional and more bigger.

\begin{figure}[htbp]
    \centering
    \begin{subfigure}[t]{0.3\linewidth}
        \centering
        \includegraphics[width=\linewidth]{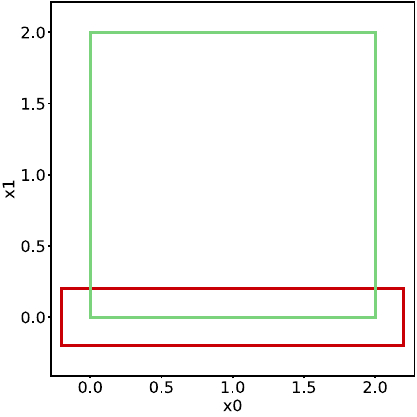}
        \caption{\textcolor{red}{$\partial O_{k+1}^{(1)}$} and initial \textcolor{mygreen}{$U_{k+1}^\prime$}}
        \label{ex2F1}
    \end{subfigure}
    \hfill
    \begin{subfigure}[t]{0.3\linewidth}
        \centering
        \includegraphics[width=\linewidth]{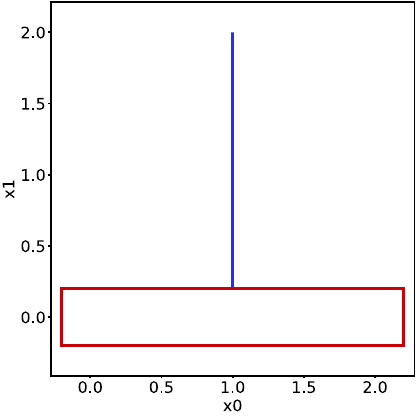}
        \caption{\textcolor{red}{$\partial O_{k+1}^{(1)}$} and final \textcolor{blue}{$U_{k+1}^\prime$} in the unsorted case}
        \label{ex2F2}
    \end{subfigure}
    \hfill
    \begin{subfigure}[t]{0.3\linewidth}
        \centering
        \includegraphics[width=\linewidth]{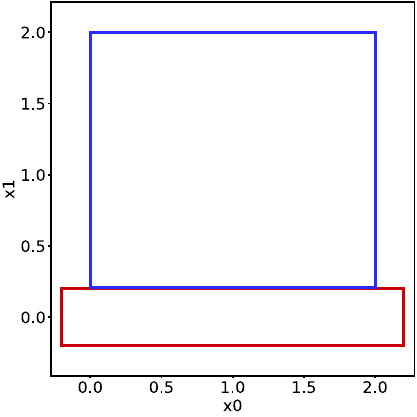}
        \caption{\textcolor{red}{$\partial O_{k+1}^{(1)}$} and final \textcolor{blue}{$U_{k+1}^\prime$} in the sorted case}
        \label{ex2F3}
    \end{subfigure}
    \caption{Illustration of Example \ref{ex2}}
    \label{fig:ex2}
\end{figure}
\end{example}

\subsection{Supplemental Figures}
\label{app: table and figure}

\begin{figure}[htbp]
    \centering
    \begin{subfigure}[t]{0.46\linewidth}
        \centering
        \includegraphics[width=\linewidth]{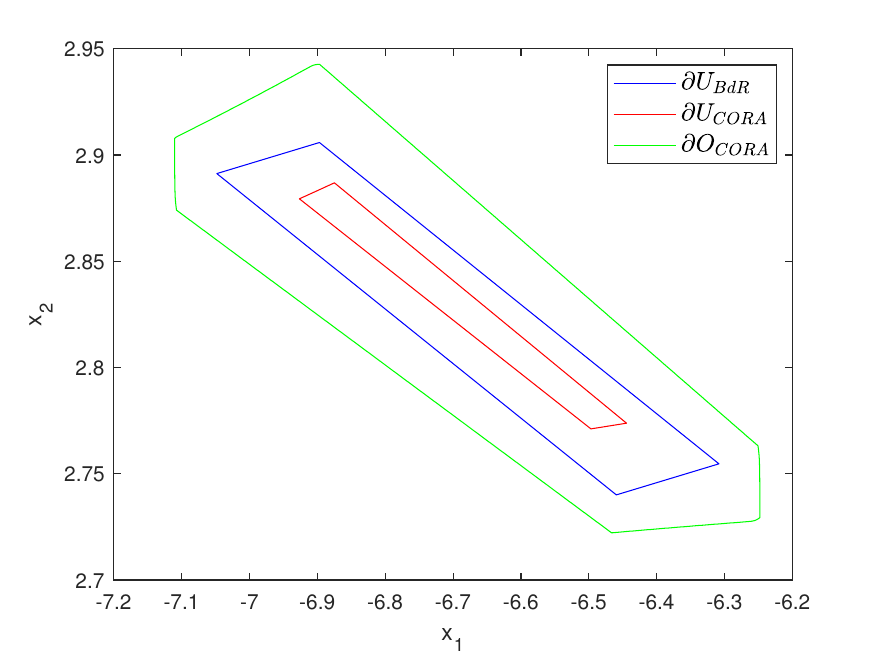}
        \caption{ElectroOsc $T = 2.5$ $\gamma_{min}:$ \textcolor{blue}{$0.88$}, \textcolor{red}{$0.57$}}
    \end{subfigure}
    \hfill
    \begin{subfigure}[t]{0.46\linewidth}
        \centering
        \includegraphics[width=\linewidth]{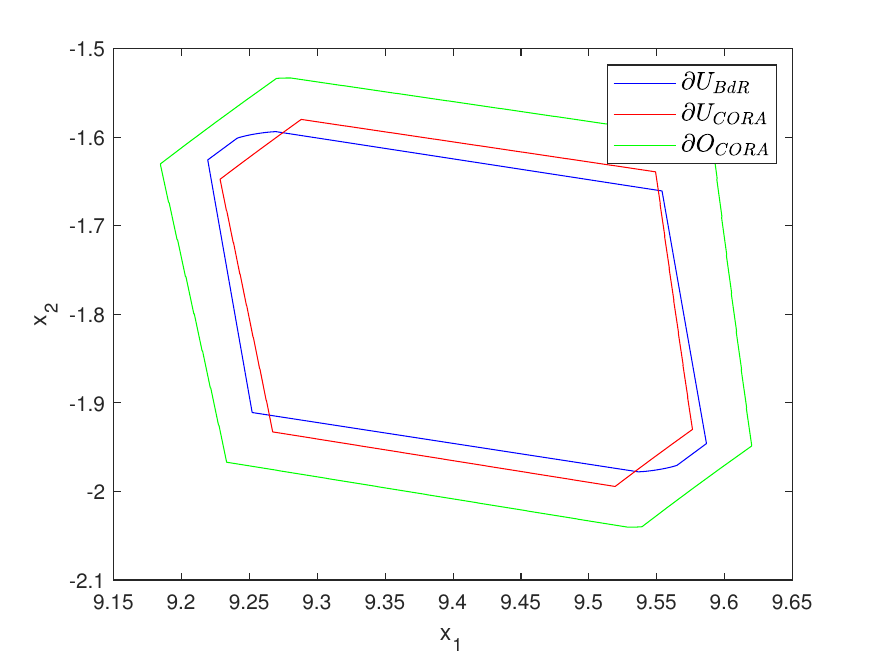}
        \caption{Rossler $T = 1.5$ $\gamma_{min}:$ \textcolor{blue}{$0.76$}, \textcolor{red}{$0.78$}}
    \end{subfigure}
    
    \bigskip

    \begin{subfigure}[t]{0.46\linewidth}
        \centering
        \includegraphics[width=\linewidth]{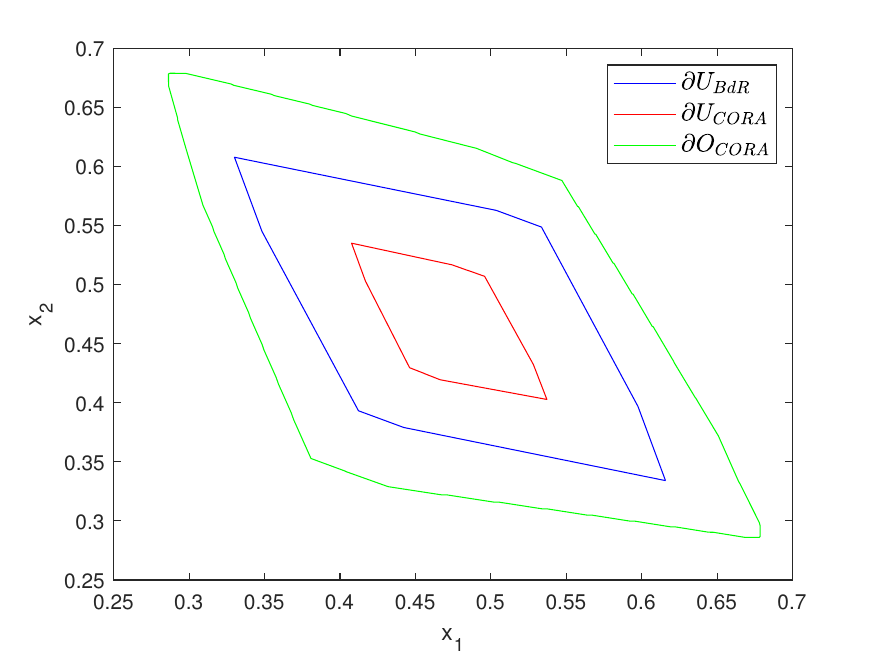}
        \caption{Lotka-Volterra $T = 1$ $\gamma_{min}:$ \textcolor{blue}{$0.65$}, \textcolor{red}{$0.34$}}
    \end{subfigure}
    \hfill
    \begin{subfigure}[t]{0.46\linewidth}
        \centering
        \includegraphics[width=\linewidth]{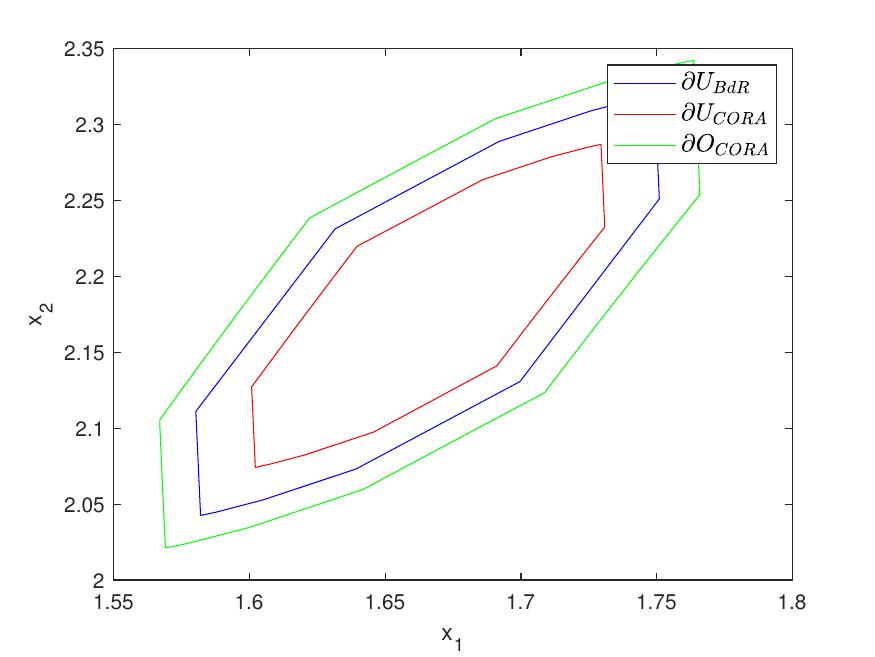}
        \caption{Tank6 $T = 80$ $\gamma_{min}:$ \textcolor{blue}{$0.82$}, \textcolor{red}{$0.63$}}
    \end{subfigure}

    \bigskip

    \begin{subfigure}[t]{0.46\linewidth}
        \centering
        \includegraphics[width=\linewidth]{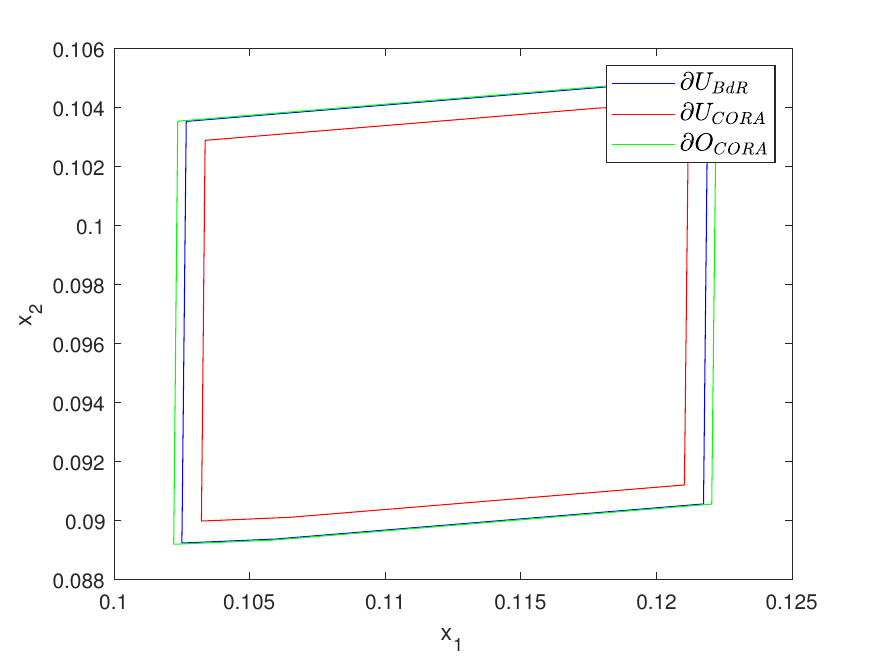}
        \caption{BiologicalSystemI $T = 0.2$ $\gamma_{min}:$ \textcolor{blue}{$0.96$}, \textcolor{red}{$0.90$}}
    \end{subfigure}
    \hfill
    \begin{subfigure}[t]{0.46\linewidth}
        \centering
        \includegraphics[width=\linewidth]{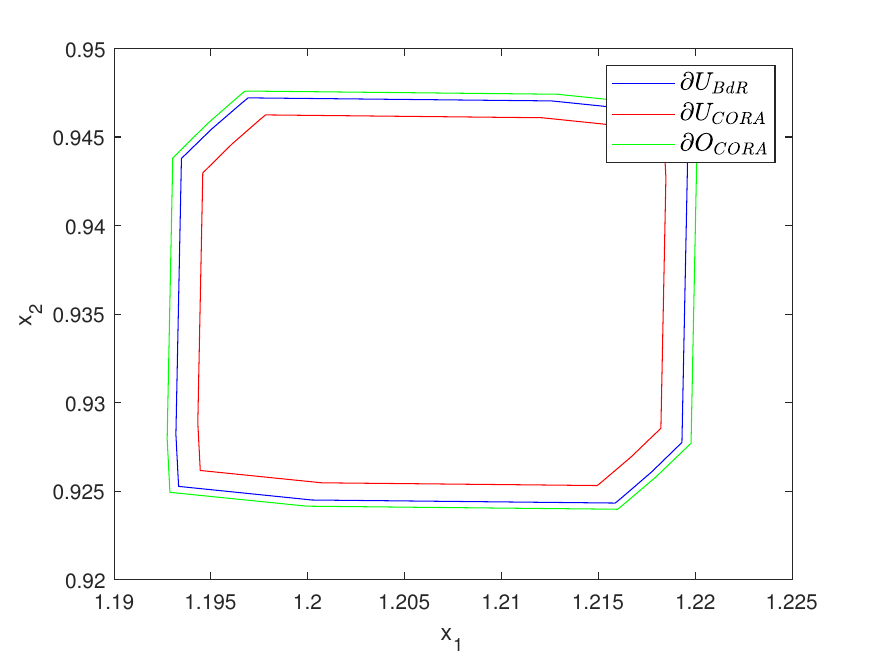}
        \caption{BiologicalSystemII $T = 0.2$ $\gamma_{min}:$ \textcolor{blue}{$0.95$}, \textcolor{red}{$0.88$}}
    \end{subfigure}

    \bigskip

    \begin{subfigure}[t]{0.46\linewidth}
        \centering
        \includegraphics[width=\linewidth]{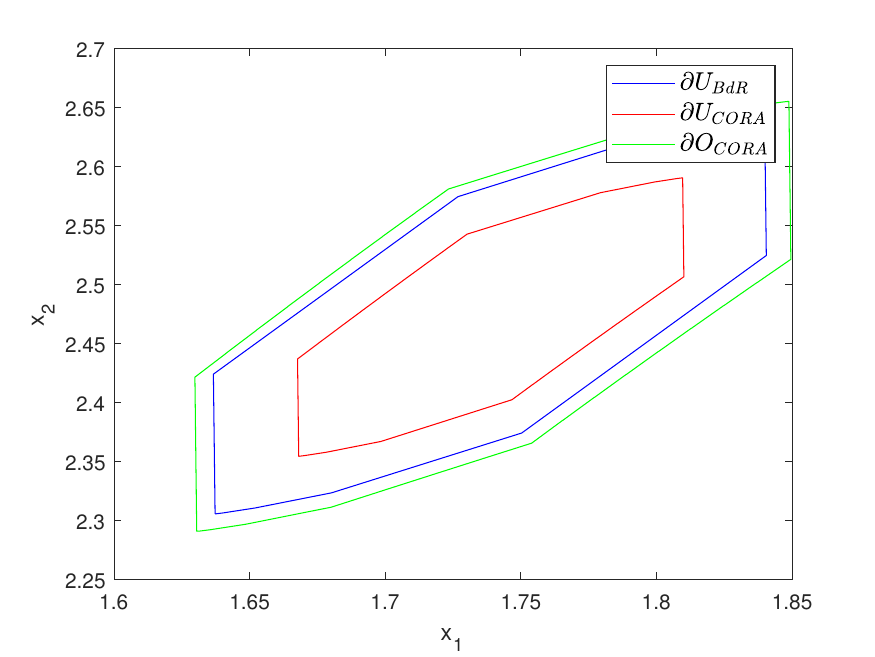}
        \caption{Tank12 $T = 60$ $\gamma_{min}:$ \textcolor{blue}{$0.77$}, \textcolor{red}{$0.56$}}
    \end{subfigure}

    \caption{Visualization of the inner-approximation computed by our approach and CORA in Table \ref{ordinary comparison}. \textcolor{blue}{Blue curve}: the boundary of inner-approximation computed by our approach. \textcolor{red}{Red curve}: the boundary of inner-approximation computed by CORA. \textcolor{green}{Green curve}: the boundary of outer-approximation computed by CORA.}
    \label{ordinary_figure}
\end{figure}

\begin{figure}[htbp]
    \centering
    \begin{subfigure}[t]{0.46\linewidth}
        \centering
        \includegraphics[width=\linewidth]{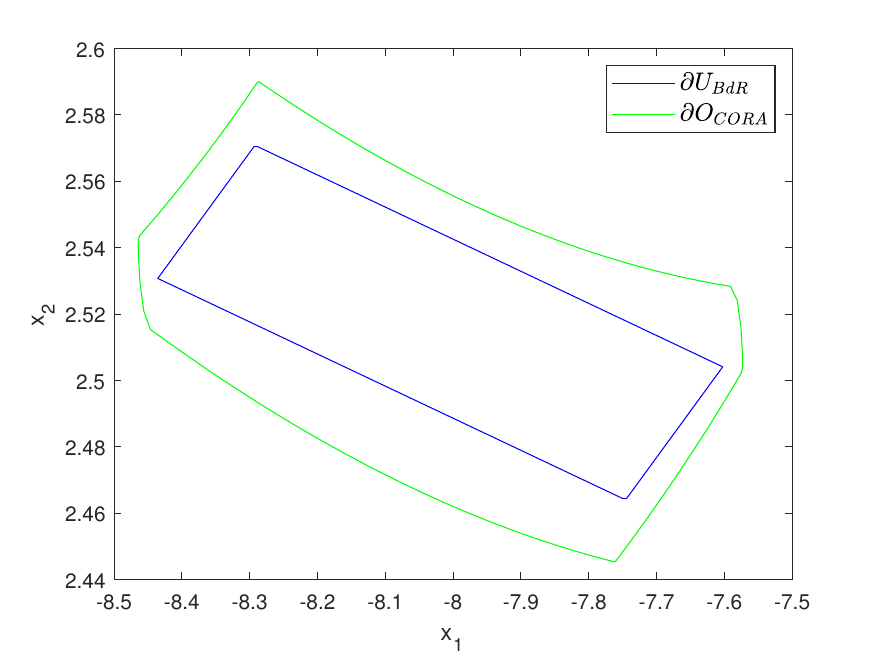}
        \caption{ElectroOsc $T = 3$ $\gamma_{min}:$ \textcolor{blue}{$0.90$}}
    \end{subfigure}
    \hfill
    \begin{subfigure}[t]{0.46\linewidth}
        \centering
        \includegraphics[width=\linewidth]{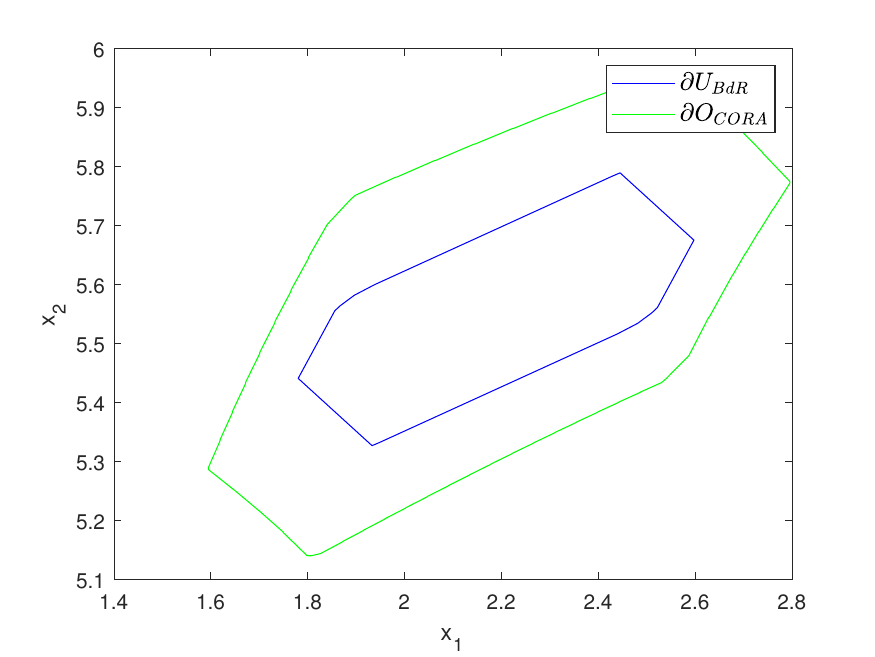}
        \caption{Rossler $T = 2.5$ $\gamma_{min}:$ \textcolor{blue}{$0.60$}}
    \end{subfigure}

    \bigskip

    \begin{subfigure}[t]{0.46\linewidth}
        \centering
        \includegraphics[width=\linewidth]{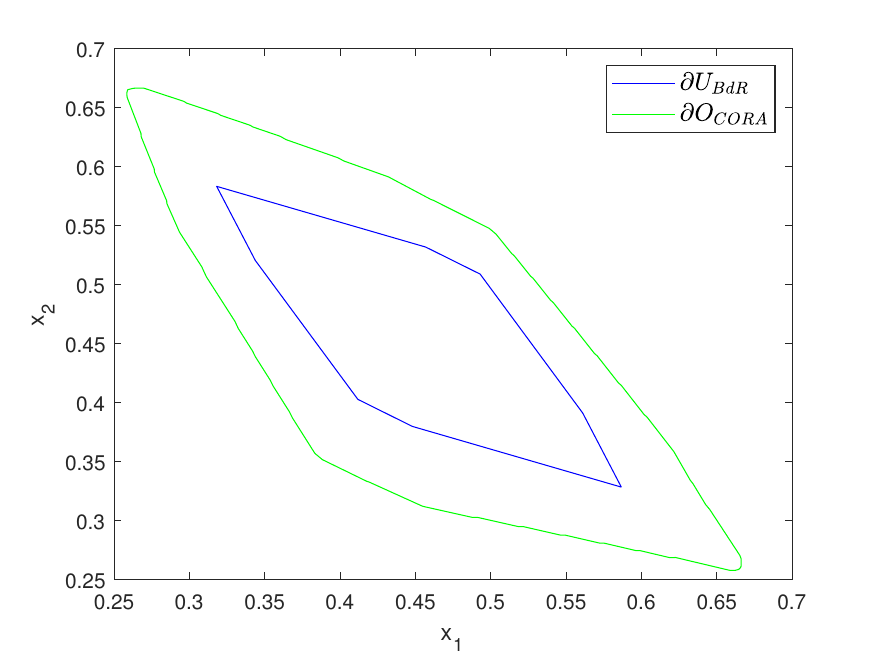}
        \caption{Lotka-Volterra $T = 1.5$ $\gamma_{min}:$ \textcolor{blue}{$0.62$}}
    \end{subfigure}
    \hfill
    \begin{subfigure}[t]{0.46\linewidth}
        \centering
        \includegraphics[width=\linewidth]{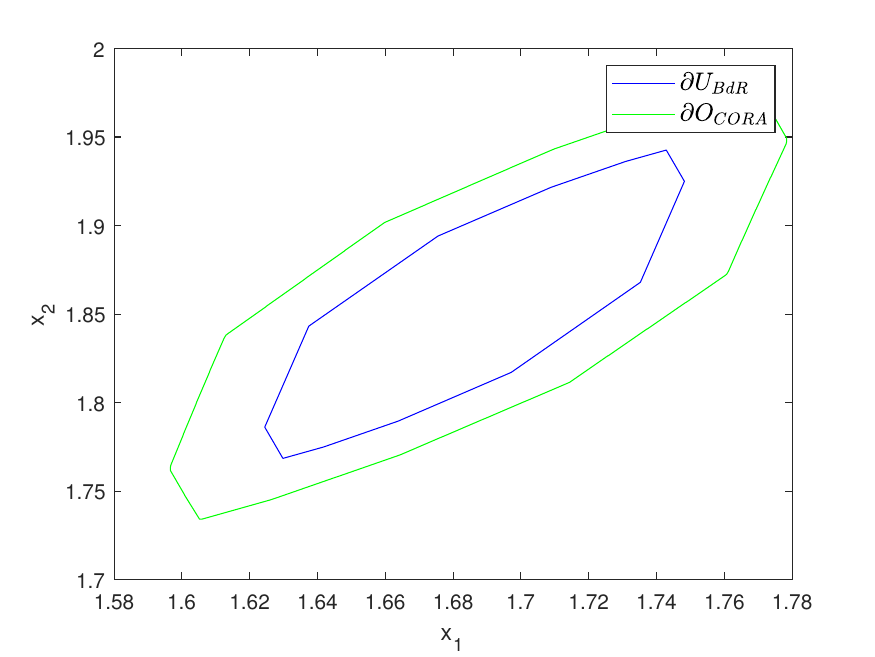}
        \caption{Tank6 $T = 120$ $\gamma_{min}:$ \textcolor{blue}{$0.65$}}
    \end{subfigure}

    \bigskip

    \begin{subfigure}[t]{0.46\linewidth}
        \centering
        \includegraphics[width=\linewidth]{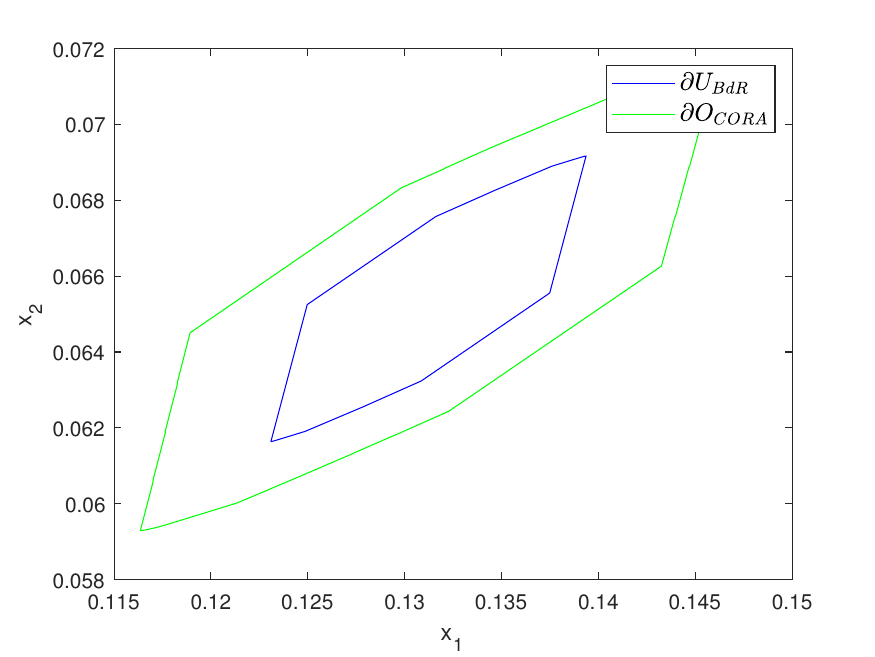}
        \caption{BiologicalSystemI $T = 1.3$ $\gamma_{min}:$ \textcolor{blue}{$0.41$}}
    \end{subfigure}
    \hfill
    \begin{subfigure}[t]{0.46\linewidth}
        \centering
        \includegraphics[width=\linewidth]{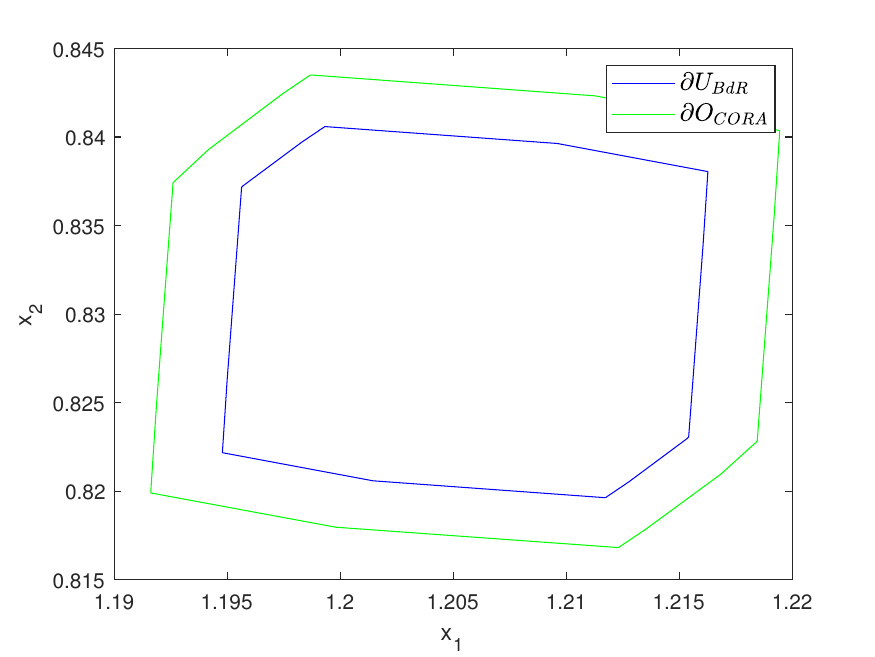}
        \caption{BiologicalSystemII $T = 0.375$ $\gamma_{min}:$ \textcolor{blue}{$0.66$}}
    \end{subfigure}

    \bigskip

    \begin{subfigure}[t]{0.46\linewidth}
        \centering
        \includegraphics[width=\linewidth]{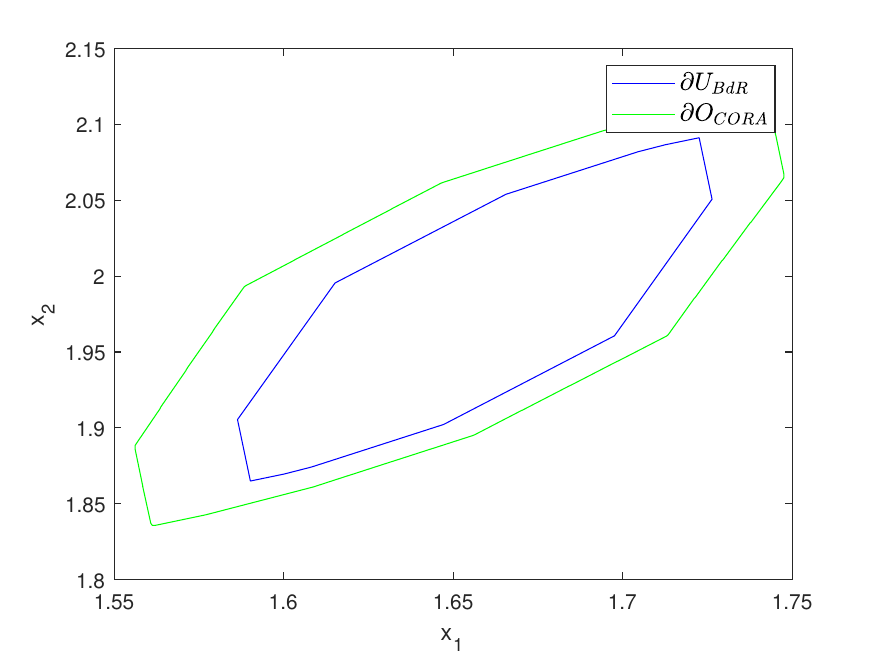}
        \caption{Tank12 $T = 100$ $\gamma_{min}:$ \textcolor{blue}{$0.49$}}
    \end{subfigure}

    \caption{Visualization of the inner-approximation computed by our approach in Table \ref{long time comparison}, for these cases CORA fails to compute inner-approximation. \textcolor{blue}{Blue curve}: the boundary of inner-approximation computed by our approach. \textcolor{green}{Green curve}: the boundary of outer-approximation computed by CORA.}
    \label{longtime_figure}
\end{figure}

\begin{figure}[htbp]
    \centering

    \begin{subfigure}[t]{0.46\linewidth}
        \centering
        \includegraphics[width=\linewidth]{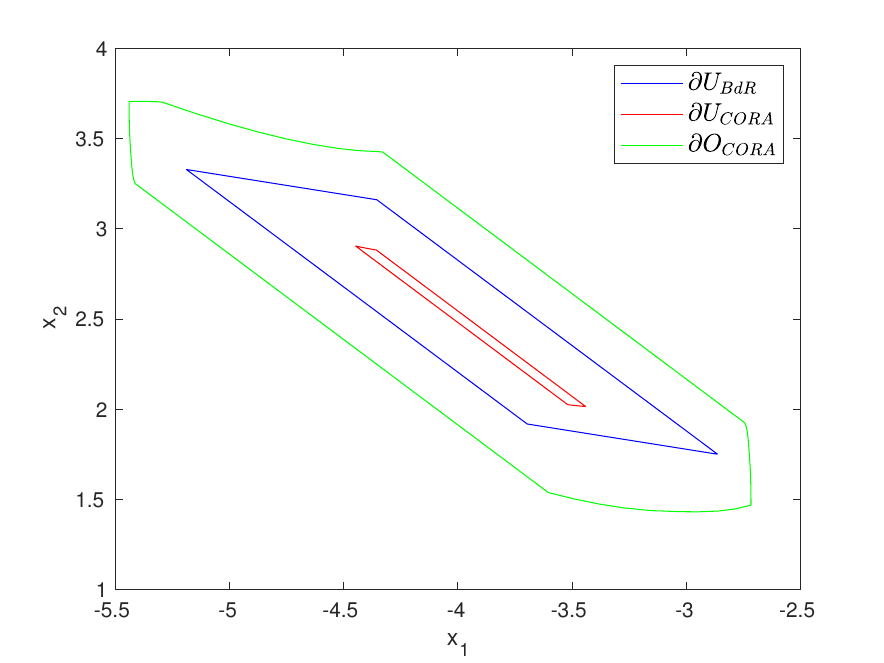}
        \caption{ElectroOsc $T = 1.5$ $\gamma_{min}:$ \textcolor{blue}{$0.92$}, \textcolor{red}{$0.43$}}
    \end{subfigure}
    \hfill
    \begin{subfigure}[t]{0.46\linewidth}
        \centering
        \includegraphics[width=\linewidth]{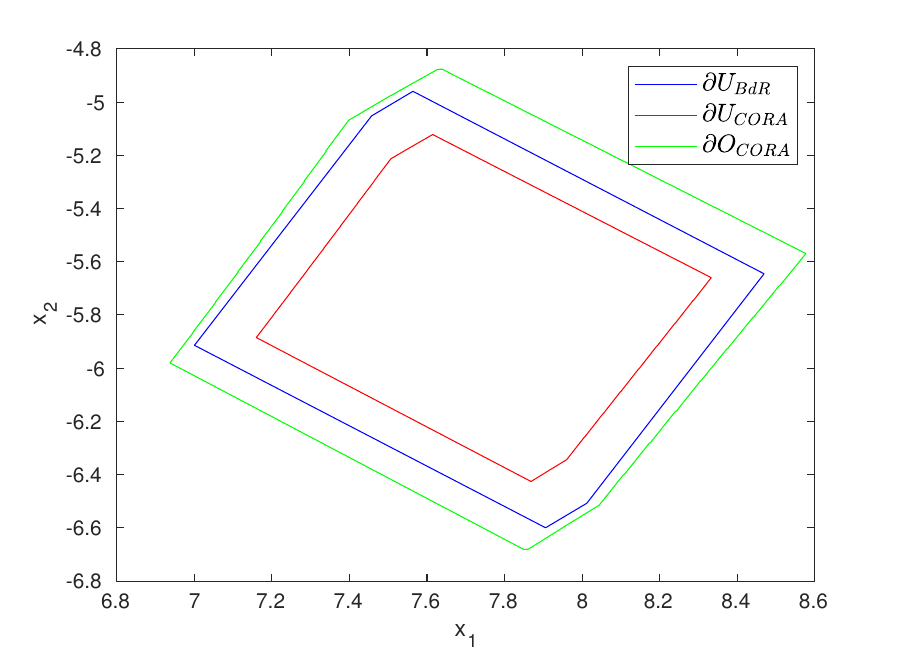}
        \caption{Rossler $T = 1$ $\gamma_{min}:$ \textcolor{blue}{$0.59$}, \textcolor{red}{$0.53$}}
    \end{subfigure}

    \bigskip

    \begin{subfigure}[t]{0.46\linewidth}
        \centering
        \includegraphics[width=\linewidth]{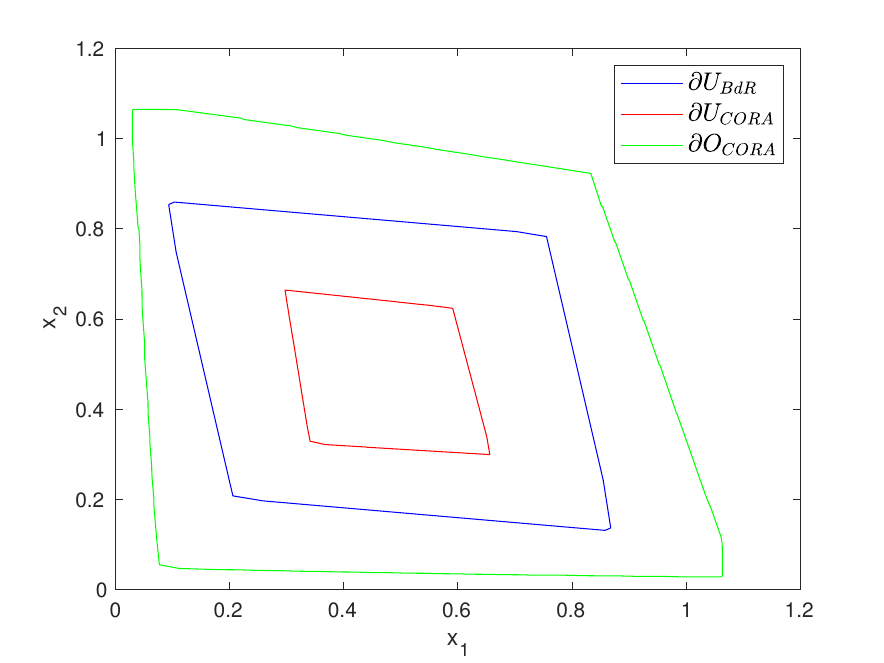}
        \caption{Lotka-Volterra $T = 0.4$ $\gamma_{min}:$ \textcolor{blue}{$0.71$}, \textcolor{red}{$0.38$}}
    \end{subfigure}
    \hfill
    \begin{subfigure}[t]{0.46\linewidth}
        \centering
        \includegraphics[width=\linewidth]{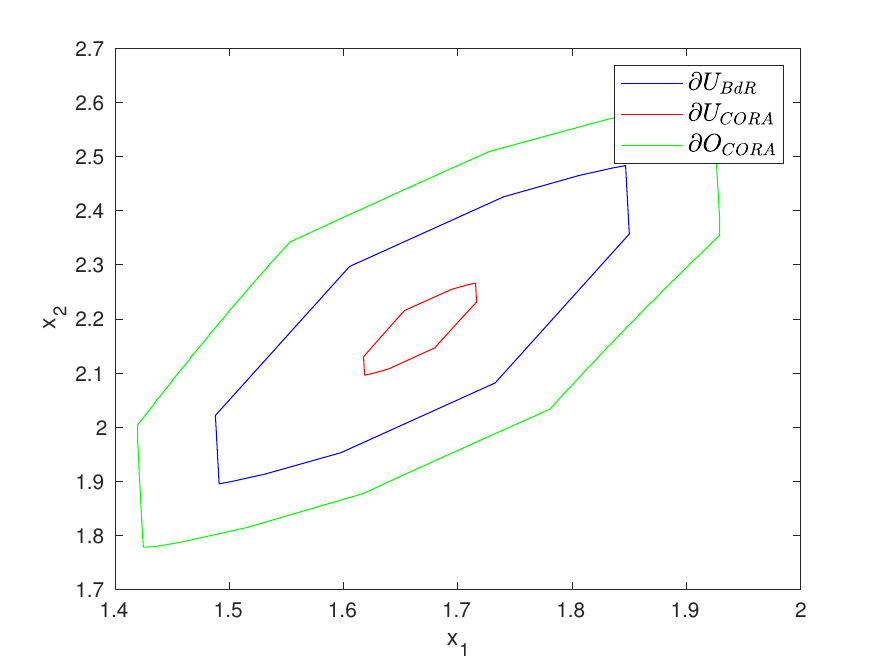}
        \caption{Tank6 $T = 80$ $\gamma_{min}:$ \textcolor{blue}{$0.60$}, \textcolor{red}{$0.13$}}
    \end{subfigure}

    \bigskip

    \begin{subfigure}[t]{0.46\linewidth}
        \centering
        \includegraphics[width=\linewidth]{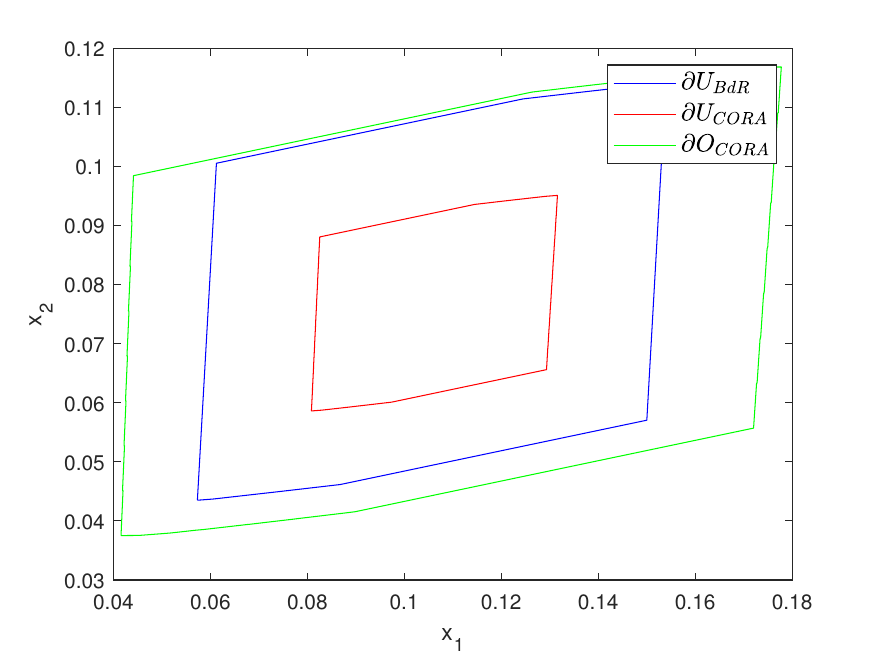}
        \caption{BiologicalSystemI $T = 0.5$ $\gamma_{min}:$ \textcolor{blue}{$0.62$}, \textcolor{red}{$0.38$}}
    \end{subfigure}
    \hfill
    \begin{subfigure}[t]{0.46\linewidth}
        \centering
        \includegraphics[width=\linewidth]{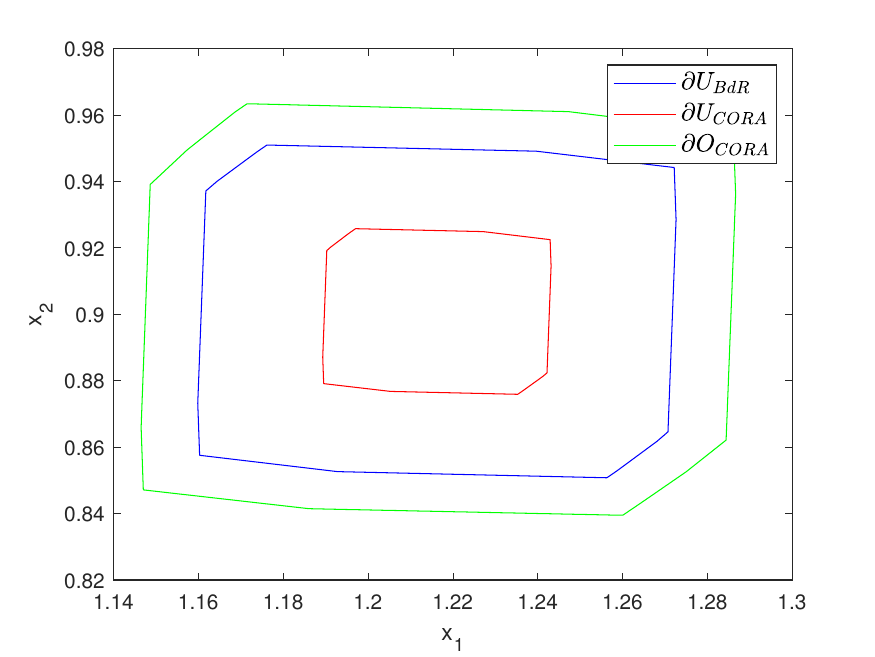}
        \caption{BiologicalSystemII $T = 0.26$ $\gamma_{min}:$ \textcolor{blue}{$0.65$}, \textcolor{red}{$0.32$}}
    \end{subfigure}

    \bigskip

    \begin{subfigure}[t]{0.46\linewidth}
        \centering
        \includegraphics[width=\linewidth]{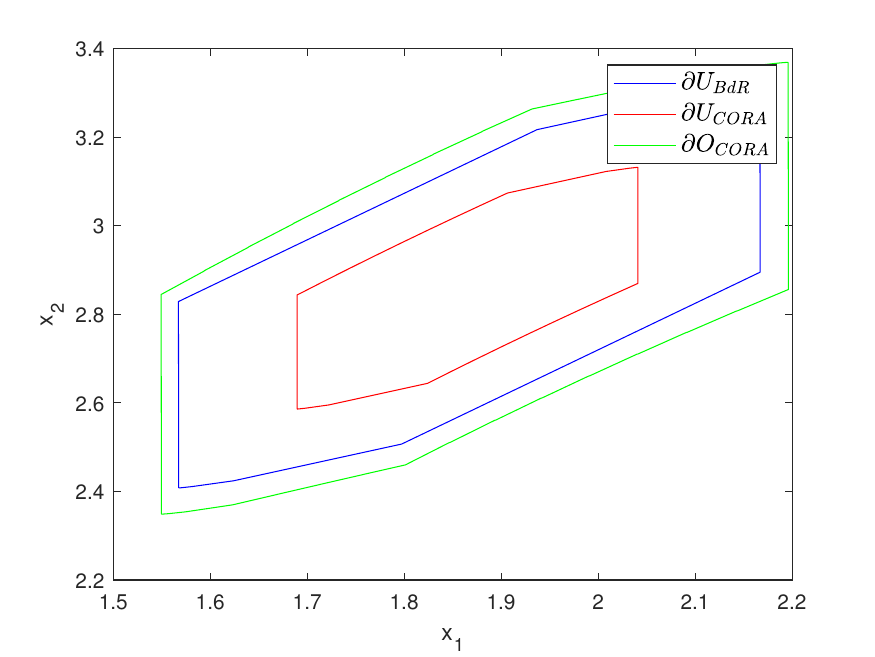}
        \caption{Tank12 $T = 40$ $\gamma_{min}:$ \textcolor{blue}{$0.73$}, \textcolor{red}{$0.41$}}
    \end{subfigure}
    
    \caption{Visualization of the inner-approximation computed by our approach and CORA in Table \ref{big init compairson}. \textcolor{blue}{Blue curve}: the boundary of inner-approximation computed by our approach. \textcolor{red}{Red curve}: the boundary of inner-approximation computed by CORA. \textcolor{green}{Green curve}: the boundary of outer-approximation computed by CORA.}
    \label{biginit_short_time_figure}
\end{figure}

\begin{figure}[htbp]
    \centering

    \begin{subfigure}[t]{0.46\linewidth}
        \centering
        \includegraphics[width=\linewidth]{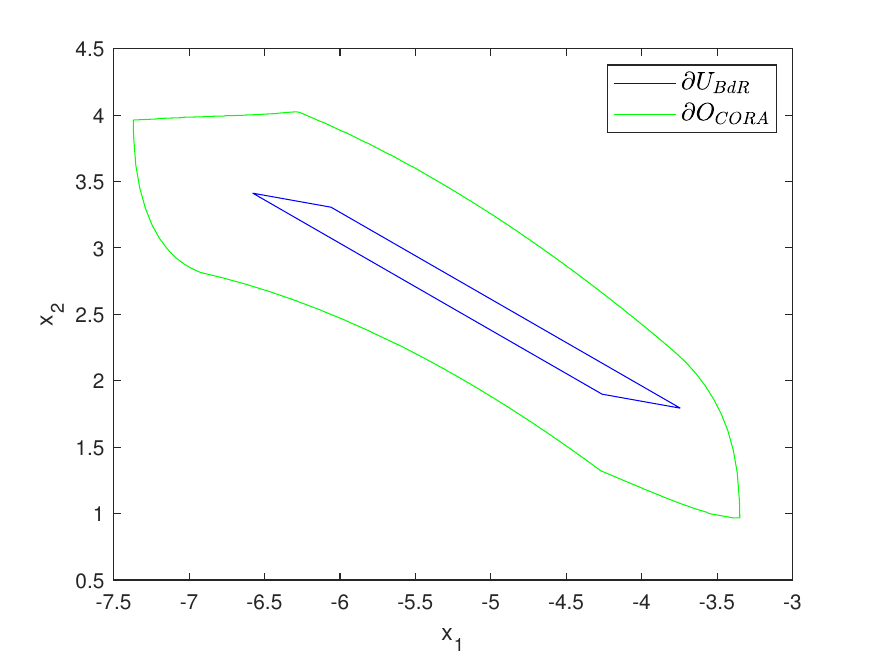}
        \caption{ElectroOsc $T = 2$ $\gamma_{min}:$ \textcolor{blue}{$0.84$}}
    \end{subfigure}
    \hfill
    \begin{subfigure}[t]{0.46\linewidth}
        \centering
        \includegraphics[width=\linewidth]{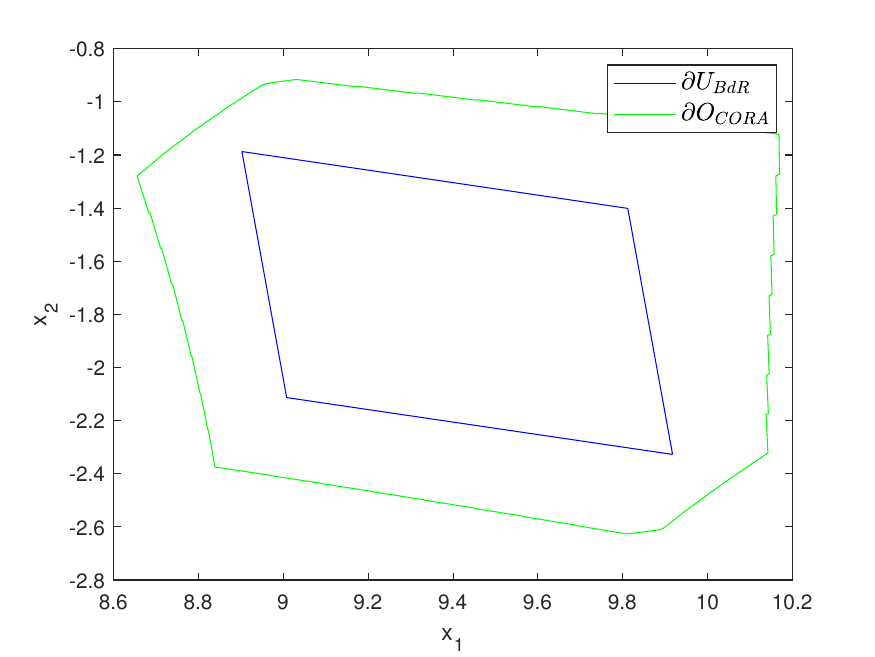}
        \caption{Rossler $T = 1.5$ $\gamma_{min}:$ \textcolor{blue}{$0.58$}}
    \end{subfigure}

    \bigskip

    \begin{subfigure}[t]{0.46\linewidth}
        \centering
        \includegraphics[width=\linewidth]{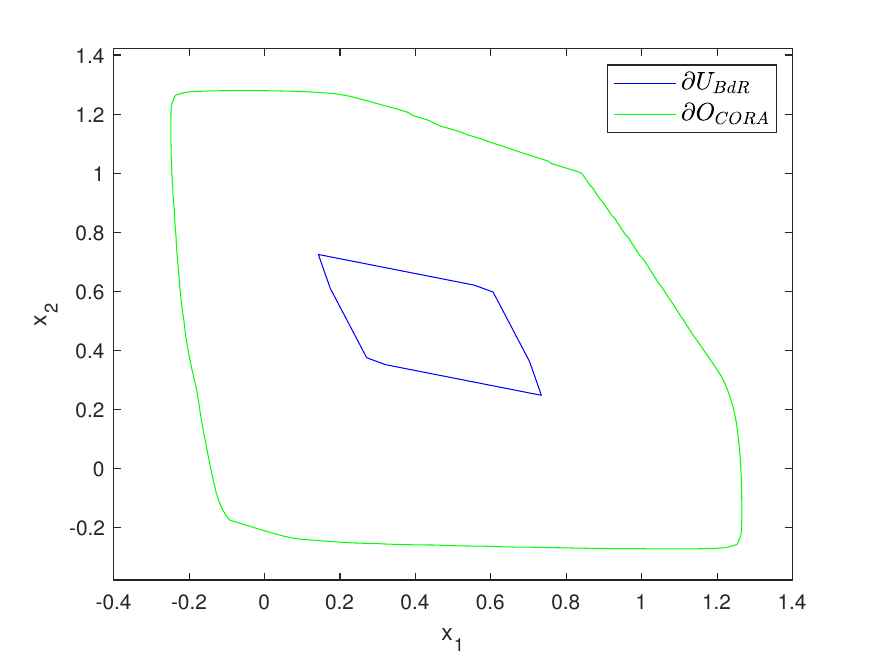}
        \caption{Lotka-Volterra $T = 1$ $\gamma_{min}:$ \textcolor{blue}{$0.52$}}
    \end{subfigure}
    \hfill
    \begin{subfigure}[t]{0.46\linewidth}
        \centering
        \includegraphics[width=\linewidth]{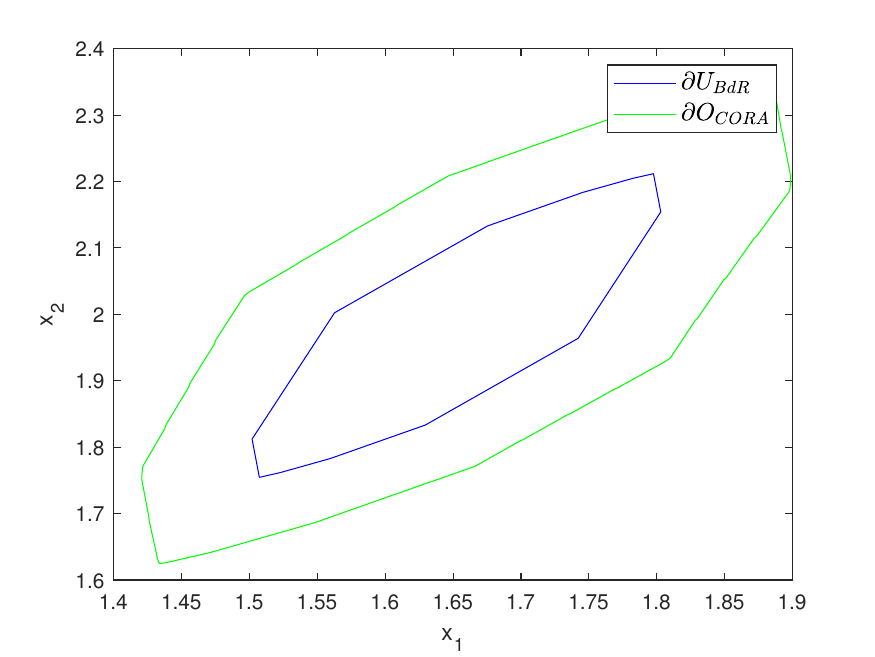}
        \caption{Tank6 $T = 100$ $\gamma_{min}:$ \textcolor{blue}{$0.53$}}
    \end{subfigure}

    \bigskip

    \begin{subfigure}[t]{0.46\linewidth}
        \centering
        \includegraphics[width=\linewidth]{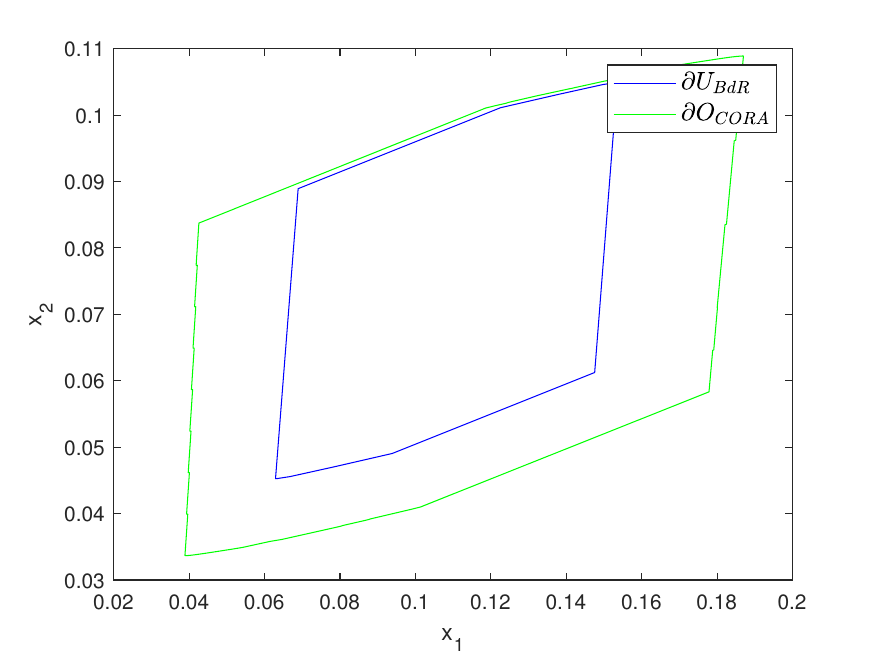}
        \caption{BiologicalSystemI $T = 0.7$ $\gamma_{min}:$ \textcolor{blue}{$0.41$}}
    \end{subfigure}
    \hfill
    \begin{subfigure}[t]{0.46\linewidth}
        \centering
        \includegraphics[width=\linewidth]{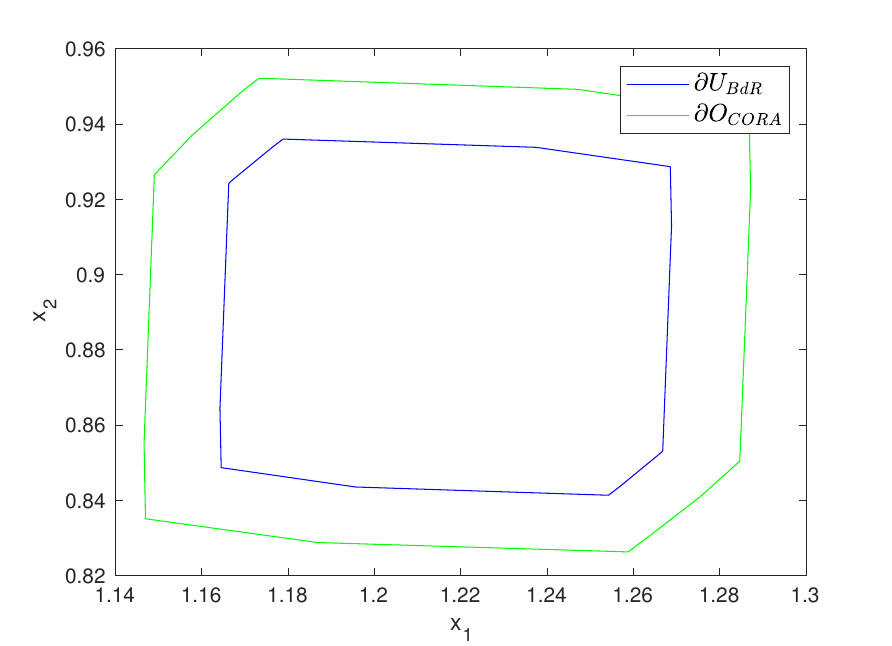}
        \caption{BiologicalSystemII $T = 0.28$ $\gamma_{min}:$ \textcolor{blue}{$0.55$}}
    \end{subfigure}

    \bigskip

    \begin{subfigure}[t]{0.46\linewidth}
        \centering
        \includegraphics[width=\linewidth]{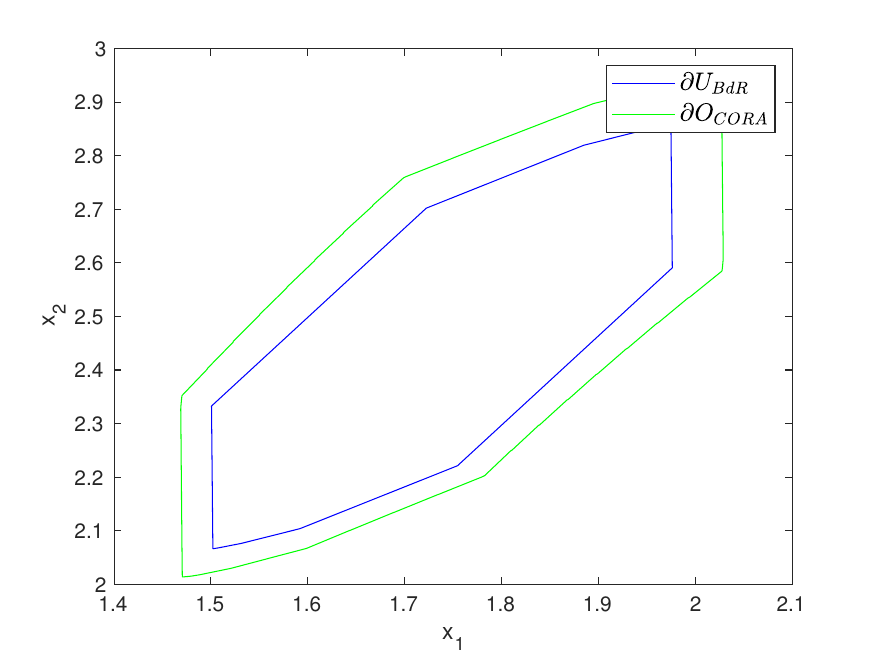}
        \caption{Tank12 $T = 60$ $\gamma_{min}:$ \textcolor{blue}{$0.54$}}
    \end{subfigure}

    \caption{Visualization of the inner-approximation computed by our approach in Table \ref{big init compairson}, for these cases CORA fails to compute inner-approximation. \textcolor{blue}{Blue curve}: the boundary of inner-approximation computed by our approach. \textcolor{green}{Green curve}: the boundary of outer-approximation computed by CORA.}
    \label{biginit_long_time_figure}
\end{figure}

\end{document}